\def\R{\mathbb R}
\def\N{\mathbb N}
\def\C{\mathbb C}
\renewcommand{\d}{{\mathrm d}}
\renewcommand{\i}{{\mathrm i}}
\def\e{\mathrm e}
\def\U{\mathrm{U}}
\def\rank{\mathrm{rank}}
\def\diag{\mathrm{diag}}
\def\S{\mathcal S}
\def\Tr{\mathrm{Tr}}
\def\MR{\mathcal{M}^{\R}}
\def\MC{\mathcal M}
\def\Ker{\mathrm{Ker}}
\newtheorem{lemma}{Lemma}[section]
\newtheorem{proposition}[lemma]{Proposition}
\newtheorem{theorem}[lemma]{Theorem}
\newtheorem{corollary}[lemma]{Corollary}
\newtheorem{observation}[lemma]{Observation}
\theoremstyle{definition}
\newtheorem{definition}[lemma]{Definition}
\newtheorem{example}[lemma]{Example}
\theoremstyle{remark}
\newtheorem{remark}[lemma]{Remark}
\newtheorem{notation}[lemma]{Notation}
\def\ps@pprintTitle{%
     \let\@oddhead\@empty
     \let\@evenhead\@empty
     \def\@oddfoot{\footnotesize\itshape Linear Algebra and its Applications 469 (2015) 569--593\hfill March 15, 2015}
     \let\@evenfoot\@oddfoot}
\begin{document}

\begin{frontmatter}

\title{Hermitian unitary matrices with modular permutation symmetry}
%
\author{Ond\v{r}ej Turek} \ead{ondrej.turek@kochi-tech.ac.jp}
\author{Taksu Cheon} \ead{taksu.cheon@kochi-tech.ac.jp}

%
\address{Laboratory of Physics, Kochi University of Technology,
Tosa Yamada, Kochi 782-8502, Japan}

\date{\today}

\begin{abstract}
We study Hermitian unitary matrices $\mathcal{S}\in\mathbb{C}^{n,n}$ with the following property: There exist $r\geq0$ and $t>0$ such that the entries of $\mathcal{S}$ satisfy $|\mathcal{S}_{jj}|=r$ and $|\mathcal{S}_{jk}|=t$ for all $j,k=1,\ldots,n$, $j\neq k$. We derive necessary conditions on the ratio $d:=r/t$ and show that these conditions are very restrictive except for the case when $n$ is even and the sum of the diagonal elements of $\S$ is zero. Examples of families of matrices $\S$ are constructed for $d$ belonging to certain intervals.
The case of real matrices $\mathcal{S}$ is examined in more detail. It is demonstrated that a real $\mathcal{S}$ can exist only for $d=\frac{n}{2}-1$, or for $n$ even and $\frac{n}{2}+d\equiv1\pmod 2$. We provide a detailed description of the structure of real $\mathcal{S}$ with $d\geq\frac{n}{4}-\frac{3}{2}$, and derive a sufficient and necessary condition of their existence in terms of the existence of certain symmetric $(v,k,\lambda)$-designs. We prove that there exist no real $\mathcal{S}$ with $d\in\left(\frac{n}{6}-1,\frac{n}{4}-\frac{3}{2}\right)$.
A parametrization of Hermitian unitary matrices is also proposed, and its generalization to general unitary matrices is given.
At the end of the paper, the role of the studied matrices in quantum mechanics on graphs is briefly explained.
\end{abstract}

\begin{keyword}
Hadamard matrix \sep conference matrix \sep
quantum graph \sep scattering matrix
\MSC[2010] 15B10 \sep 15B57 \sep 81Q35
\end{keyword}

\end{frontmatter}

\
\section{Introduction}

Unitary matrices with various special properties emerge in a wide scale of applications in physics and in the engineering, and at the same time they constantly attract the attention of pure mathematicians. One of the most fascinating and longest-standing problems in mathematics is the Hadamard conjecture: \emph{If $n$ is a multiple of $4$, then there exists an $n\times n$ matrix $H$ with entries from $\{-1,1\}$ such that $HH^T=nI$.} Although the conjecture is believed to be true, no proof has yet been found. The matrix $H$ with these properties is called Hadamard matrix of order $n$, and is just a multiple of an orthogonal matrix having all the entries of the same moduli. Hadamard matrices have numerous practical applications in coding, cryptography, signal processing, artificial neural networks and many other fields, see, e.g., the monograph~\cite{Ho}.

A similar problem is related to the existence of so-called conference matrices. A conference matrix of order $n$ is an $n\times n$ matrix $C$ with $0$ on the diagonal and $\pm1$ off the diagonal such that $CC^T=(n-1)I$. Matrices of this type are important for example in telephony~\cite{Be50} and in statistics~\cite{Ra59}, but as in the case of Hadamard matrices, there is still no definite characterization of orders $n$ for which a conference matrix exists.

Note that both Hadamard and conference matrices have these two properties:
\begin{itemize}
\item[(P1)] they are multiples of orthogonal matrices;
\item[(P2)] all their off-diagonal entries are of the same moduli, and also all their diagonal entries are of the same moduli.
\end{itemize}
These properties can serve as an inspiration to generalize Hadamard and conference matrices to the whole set of matrices satisfying (P1) and (P2). A subclass fulfilling a certain additional condition, namely the class of matrices with constant diagonal, has been studied in~\cite{SL82,La83}.

Both Hadamard and conference matrices are by definition real, but they can be naturally generalized to complex ones by allowing their entries to take any values from the unit circle instead of
$\{1,-1\}$. Complex Hadamard and conference matrices and their properties are nowadays widely studied as well, see, e.g.,~\cite{Ho,Di09}. This fact may serve as another inspiration for generalizations: Examine all unitary matrices satisfying (P2).

The subject to be discussed in this paper is close to the aforementioned generalization. We will study complex unitary matrices satisfying (P2) that are also Hermitian. 
Our aim is to examine their existence and their properties, and perhaps to motivate a more extensive study of them,
as they play an important role in the quantum mechanics on graphs (we will devote Section~\ref{QuantumGraphs} at the end of the paper to a more detailed explanation). Since the real matrices of this type are for many reasons interesting, we will focus on the real case in a separate section.
Another purpose of the paper is to propose a parametrization of unitary matrices, with a particular accent put on their Hermitian subset.

\section{Preliminaries}

\begin{definition}\label{MPS}
\textit{(i)}\quad A square matrix $M\in\C^{n,n}$ is \emph{permutation-symmetric} if there are $a,b\in\C$ such that the entries of $M$ satisfy
$$
M_{jj}=a \quad \text{and} \quad M_{jk}=b \quad \text{for all $j,k=1,\ldots,n$, $j\neq k$}\,.
$$
\textit{(ii)}\quad A square matrix $M\in\C^{n,n}$ is \emph{modularly permutation-symmetric} if there are $a,b\geq0$ such that the entries of $M$ satisfy
$$
|M_{jj}|=a \quad \text{and} \quad |M_{jk}|=b \quad \text{for all $j,k=1,\ldots,n$, $j\neq k$}\,.
$$
\end{definition}

``Modularly permutation-symmetric'' will be hereinafter abbreviated as MPS. If $M$ is a permutation-symmetric matrix (or an MPS matrix) and $P$ is a permutation matrix of the same size, then $PMP^{-1}$ is a permutation-symmetric matrix (or an MPS matrix, respectively) as well.

In this paper we are particularly interested in \emph{unitary} and at the same time \emph{Hermitian} modularly permutation-symmetric matrices; we will denote them by the symbol $\S$.
As diagonal Hermitian unitary MPS matrices are trivially of the form $\S=\diag(\pm1,\pm1,\ldots,\pm1)$, from now on we will focus on the case when the modulus of the off-diagonal entries is nonzero.
For the sake of brevity, let us denote the set of all Hermitian unitary MPS matrices with the ratio $d:=\frac{|\text{diagonal entry}|}{|\text{off-diagonal entry}|}$ by the symbol $\MC_n(d)$, i.e.,
$$
\MC_n(d)=\left\{\S\in\U(n)\ \left|\ \text{$\S$ is MPS}\;\text{ and }\;\frac{|\S_{jj}|}{|\S_{jk}|}=d\;\text{ and }\;\S=\S^*\right.\right\}\,,
$$
in other words, elements of $\MC_n(d)$ are Hermitian unitary matrices $n\times n$ of the type
$$
\S=\frac{1}{\sqrt{d^2+n-1}}\left(\begin{array}{ccccc}
\pm d & \e^{\i\alpha_{12}} & \e^{\i\alpha_{13}} & \cdots & \e^{\i\alpha_{1n}} \\
\e^{-\i\alpha_{12}} & \pm d & \e^{\i\alpha_{23}} & \cdots & \e^{\i\alpha_{2n}} \\
\e^{-\i\alpha_{13}} & \e^{-\i\alpha_{23}} & \pm d & \cdots & \e^{\i\alpha_{3n}} \\
\vdots & \vdots & & \ddots & \vdots \\
\e^{-\i\alpha_{1n}} & \e^{-\i\alpha_{2n}} & \e^{-\i\alpha_{3n}} & \cdots & \pm d
\end{array}\right)\,.
$$

\begin{remark}
For $d=0$ and $d=1$, $\MC_n(d)$ represents the set of $n\times n$ Hermitian conference matrices and Hermitian Hadamard matrices, respectively:
\begin{itemize}
\item $\S\in\MC_n(0)$ if and only if $C:=\sqrt{n-1}\cdot\S$ is a (complex) Hermitian conference matrix;
\item $\S\in\MC_n(1)$ if and only if $H:=\sqrt{n}\cdot\S$ is a (complex) Hermitian Hadamard matrix.
\end{itemize}
\end{remark}

Within each set $\MC_n(d)$ we introduce an equivalence:
\begin{definition}\label{equiv}
We say that matrices $\S_1,\S_2\in\MC_n(d)$ are \emph{equivalent}, written as $\S_1\sim\S_2$, if one can be obtained from the other by performing a finite sequence of the following operations:
\begin{itemize}
\item for certain $j,k$, transpose the $j$-th and the $k$-th row, and at the same time transpose the $j$-th and the $k$-th column;
\item for certain $j$ and $\phi\in\R$, multiply the $j$-th row by $\e^{\i\phi}$, and at the same time multiply the $j$-th column by $\e^{-\i\phi}$;
\item multiply the whole matrix by $-1$.
\end{itemize}
\end{definition}
In other words, $\S_1\sim\S_2$ if and only if there exist a permutation matrix $P$ and a diagonal unitary matrix 
$D=\diag(\e^{\i\phi_1},\e^{\i\phi_2},\ldots,\e^{\i\phi_n})$ such that
$$
\S_1=DP\S_2P^{-1}D^{-1} \qquad\text{or}\qquad \S_1=-DP\S_2P^{-1}D^{-1}\,.
$$

\begin{remark}
In the literature on Hadamard matrices, a weaker equivalence is mostly used, namely that the operations can be performed independently on the rows and on the columns. Within the set $\MC_n(d)$, however, we require the equivalence as it is defined above, mainly because it ensures the property 
$(\S_1\in\MC_n(d)\,\text{ and }\,\S_1\sim\S_2)\ \Rightarrow\ \S_2\in\MC_n(d)$.
\end{remark}

\begin{notation}\label{IJ}
Everywhere in the paper, the symbols $I_k$ and $J_k$ denote the identity matrix of order $k$ and the matrix $k\times k$ all of whose entries are $1$, respectively.
\end{notation}

Finally, let us give the definition of the symmetric $(v,k,\lambda)$-design which will be useful for contructions of matrices $\S\in\MC_n(d)$ in Section~\ref{Construction} and at the end of Section~\ref{Real}.
\begin{definition}\label{SymDesign}
Let $v>k>\lambda\geq1$ be integers. A \emph{symmetric $(v,k,\lambda)$-design} is a pair $\mathcal{D}=(\mathcal{P},\mathcal{B})$, where $\mathcal{P}=\{p_1,\ldots,p_v\}$ is a set of $v$ points and $\mathcal{B}=\{B_1,\ldots,B_v\}$ is a set of $v$ subsets of $\mathcal{P}$ (blocks) each containing $k$ points, such that each pair of distinct points is contained in exactly $\lambda$ blocks.

An \emph{incidence matrix} $A=(A_{ij})$ of $\mathcal{D}$ is a $v\times v$ matrix with entries from $\{0,1\}$, where $A_{ij}=1$ if and only if $p_j\in B_i$.
\end{definition}
An $A\in\{0,1\}^{v,v}$ is an incidence matrix of a symmetric $(v,k,\lambda)$-design if and only if
\begin{equation}\label{IncidMat}
AA^T=(k-\lambda)I_v+\lambda J_v \quad\text{and}\quad AJ_v=k J_v\,,
\end{equation}
cf.~\cite{Wa}, Thm. 2.8, or~\cite{Ho}.


\section{Parametrization of unitary matrices}

This section addresses the problem of parametrization of unitary matrices. The result will be useful later in this paper, but we believe that it may be generally of interest in itself. The solution we present is based on ideas from \cite{CET10} and \cite{CET10b}.
 
We begin with the case when $U\in\U(n)$ is Hermitian, and then we will generalize the parametrization to all unitary matrices. At the end of the section it will be shown that after a certain minor upgrade, the parametrization is applicable much more generally, namely to Hermitian matrices $H$ solving the equation $H^2=aI+bH$.

\begin{observation}\label{spectrum}
If a matrix $\S$ is unitary and Hermitian, then the eigenvalues of $\S$ are from the set $\{-1,1\}$.
\end{observation}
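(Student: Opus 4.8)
The plan is to use two defining properties of $\S$: it is unitary ($\S\S^*=I$) and Hermitian ($\S=\S^*$). Combining these immediately gives $\S^2=\S\S^*=I$, so $\S$ satisfies the polynomial equation $\S^2=I$. This single relation is the crux of the argument, and the remaining work is to translate it into a statement about eigenvalues.

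First I would let $\lambda$ be any eigenvalue of $\S$ with a corresponding nonzero eigenvector $v$, so that $\S v=\lambda v$. Applying $\S$ once more yields $\S^2 v=\lambda^2 v$, while the identity $\S^2=I$ forces $\S^2 v=v$. Comparing the two expressions gives $\lambda^2 v=v$, and since $v\neq0$ we conclude $\lambda^2=1$, hence $\lambda\in\{-1,1\}$. This shows every eigenvalue lies in $\{-1,1\}$ as claimed.

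As a minor technical point, one should confirm that the eigenvalues are genuinely real numbers rather than arbitrary complex roots of $\lambda^2=1$ (which they are, but it is worth noting that Hermiticity already guarantees real eigenvalues independently, providing a consistency check). I do not anticipate any real obstacle here: the statement is essentially a direct consequence of the minimal polynomial of $\S$ dividing $x^2-1=(x-1)(x+1)$. The only point requiring a little care is the clean derivation of $\S^2=I$ from the hypotheses, but this follows at once from substituting $\S^*=\S$ into the unitarity condition $\S\S^*=I$.
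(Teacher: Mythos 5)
Your proof is correct. The paper states this as an Observation without giving any proof (it is treated as immediate), and your argument --- deriving $\S^2=I$ from unitarity plus hermiticity and concluding $\lambda^2=1$ for every eigenvalue --- is exactly the standard justification one would supply; the equally common alternative (eigenvalues of a unitary matrix have modulus $1$, eigenvalues of a Hermitian matrix are real, and the intersection of the unit circle with $\R$ is $\{-1,1\}$) gives the same conclusion. One small remark: your ``minor technical point'' is vacuous, since the complex solutions of $\lambda^2=1$ are already precisely $\pm1$ (unlike, say, $\lambda^2=-1$), so no appeal to the reality of Hermitian eigenvalues is needed.
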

The most important result of this section follows.
\begin{theorem}\label{param S}
\begin{itemize}
\item[(i)] Let $\S$ be a Hermitian unitary matrix of order $n$. If $\S\neq\pm I_n$, then there exist an $m\in\{1,\ldots,n-1\}$, a matrix $T\in\C^{m,n-m}$ and a permutation matrix $P$ such that
\begin{equation}\label{S}
\begin{split}
\S&=
-I_n+
2P\left(\begin{array}{c} I_m \\ T^* \end{array}\right)
\left(I_m+TT^*\right)^{-1}
\left(\begin{array}{cc}
I_m & T
\end{array}\right)P^{-1} \\
&=P\left(\begin{array}{cc}
-I_m+2\left(I_m+TT^*\right)^{-1} & 
2\left(I_m+TT^*\right)^{-1}T \\
2T^*\left(I_m+TT^*\right)^{-1} & 
-I_{n-m}+2T^*\left(I_m+TT^*\right)^{-1}T
\end{array}\right)P^{-1}\,.
\end{split}
\end{equation}
\item[(ii)] For any $m\in\{1,\ldots,n-1\}$, for any $T\in\C^{m,n-m}$ and for any permutation matrix $P$ of order $n$, the matrix $\S$ given by \eqref{S} is Hermitian unitary.
\item[(iii)] If $\S$ is given by \eqref{S}, then the columns of the matrices
$$
P\left(\begin{array}{c} I_m \\ T^* \end{array}\right) \qquad\text{and}\qquad P\left(\begin{array}{c} T \\ -I_{n-m} \end{array}\right)
$$
are eigenvectors of $\S$ corresponding to the eigenvalues $1$ and $-1$, respectively.
\end{itemize}
\end{theorem}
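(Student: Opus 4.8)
The plan is to exploit the fact, recorded in Observation~\ref{spectrum}, that a Hermitian unitary $\S$ satisfies $\S^2=I^{(n)}$ together with $\S=\S^*$, so that $Q:=\frac12\left(I^{(n)}+\S\right)$ is precisely the orthogonal projection onto the eigenspace $W_+:=\Ker(\S-I^{(n)})$, and $\S=-I^{(n)}+2Q$. The entire content of part~(i) is therefore to show that the orthogonal projection onto $W_+$ can be written in the stated form; part~(ii) will then follow immediately from the observation that $-I^{(n)}+2Q$ is Hermitian and unitary whenever $Q$ is a Hermitian idempotent; and part~(iii) reduces to a short orthogonality check.

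For part~(i), put $m:=\dim W_+$; since $\S\neq\pm I^{(n)}$ both eigenvalues $\pm1$ occur, whence $m\in\{1,\ldots,n-1\}$. I would first invoke the elementary linear-algebra fact that a full-column-rank matrix $V_0\in\C^{n,m}$ whose columns form a basis of $W_+$ has $m$ linearly independent rows; choosing a permutation matrix $P$ whose inverse moves those rows into the top $m$ positions, the matrix $P^{-1}V_0$ acquires an invertible $m\times m$ top block $A$. Right-multiplying by $A^{-1}$ (a change of basis of $W_+$ that leaves the column space untouched) normalizes the top block to $I^{(m)}$, so that $W_+=\operatorname{col}\!\left(P\binom{I^{(m)}}{T^*}\right)$ for a suitable $T\in\C^{m,n-m}$. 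Writing $V:=\binom{I^{(m)}}{T^*}$, one computes $V^*V=I^{(m)}+TT^*$, and the orthogonal projection onto $\operatorname{col}(PV)$ is $PV(V^*V)^{-1}V^*P^{-1}$. Substituting into $\S=-I^{(n)}+2Q$ gives the first expression in~\eqref{S}, while the second (block) form results from the routine block multiplication
$$
V(V^*V)^{-1}V^*=\begin{pmatrix} (I^{(m)}+TT^*)^{-1} & (I^{(m)}+TT^*)^{-1}T \\ T^*(I^{(m)}+TT^*)^{-1} & T^*(I^{(m)}+TT^*)^{-1}T\end{pmatrix}.
$$

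For part~(ii), given arbitrary $m$, $T$ and $P$, I would set $Q:=PV(V^*V)^{-1}V^*P^{-1}$ with $V=\binom{I^{(m)}}{T^*}$, noting that $I^{(m)}+TT^*$ is Hermitian positive definite and hence invertible, so $Q$ is well defined. A one-line check gives $Q^*=Q$ (using $P^{-1}=P^*$ and the Hermiticity of $V^*V$) and $Q^2=Q$ (since $(V^*V)^{-1}V^*V=I^{(m)}$), so $Q$ is a Hermitian projection; consequently $\S=-I^{(n)}+2Q$ is Hermitian, and $\S^2=I^{(n)}-4Q+4Q^2=I^{(n)}$ shows it is unitary. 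For part~(iii), the columns of $PV$ lie in $\operatorname{col}(PV)$, which $Q$ fixes pointwise, so $\S(PV)=-PV+2PV=PV$, i.e.\ they are $+1$ eigenvectors. Setting $W:=\binom{T}{-I^{(n-m)}}$ and computing $V^*W=T-T=0$ shows that the columns of $PW$ are orthogonal to $\operatorname{col}(PV)=\Ker(\S-I^{(n)})$, hence lie in $\Ker(\S+I^{(n)})$ and satisfy $\S(PW)=-PW$; being $n-m$ in number and manifestly independent, they span the $-1$ eigenspace.

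The only genuinely non-computational step is the basis normalization in part~(i)—guaranteeing, through the correct choice of $P$, that $W_+$ admits a basis with invertible top $m\times m$ block. I expect this to be the main (though mild) obstacle; once it is in place, everything else rests on the single identity $\S=-I^{(n)}+2Q$ and routine block algebra.
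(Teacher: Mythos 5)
Your proposal is correct, but it takes a genuinely different route from the paper's. The paper works with $\S+I^{(n)}$ itself: since $\rank(\S+I^{(n)})=m$, it posits a rank factorization $\S+I^{(n)}=P\bigl(\begin{smallmatrix} M & MT_1 \\ T_2M & T_2MT_1\end{smallmatrix}\bigr)P^{-1}$ with $M\in\C^{m,m}$ regular, uses Hermiticity to get $M=M^*$ and $T_2=T_1^*$, and then determines the unknown middle factor by imposing $\S\S^*=I^{(n)}$, which collapses to the matrix equation $2M+M^2+MTT^*M=0$, i.e.\ $M=2\left(I^{(m)}+TT^*\right)^{-1}$. You instead observe that $Q=\frac{1}{2}\left(I^{(n)}+\S\right)$ is the orthogonal projection onto $\Ker(\S-I^{(n)})$ and invoke the standard column-space projection formula $PV(V^*V)^{-1}V^*P^{-1}$, so the factor $\left(I^{(m)}+TT^*\right)^{-1}$ appears at once as an inverse Gram matrix rather than as the solution of a quadratic matrix equation. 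Both arguments hinge on the same normalization step --- permuting rows so that a basis matrix of the $+1$ eigenspace has an invertible top $m\times m$ block --- which you spell out more explicitly than the paper does (the paper simply asserts the factorized form with regular $M$). Your route is cleaner for the Hermitian case: parts (ii) and (iii) become one-line statements about Hermitian idempotents and the orthogonality $V^*W=0$, and you note the positive definiteness of $I^{(m)}+TT^*$, a point the paper's part (ii) passes over as ``obvious.'' What the paper's more computational route buys is generality: the same ``solve for $M$'' strategy transfers verbatim to general (non-Hermitian) unitary matrices in Theorem~\ref{param U}, where $\frac{1}{2}\left(U+I^{(n)}\right)$ is no longer a projection and the middle factor becomes $2\left(I^{(m)}+TT^*+\i S\right)^{-1}$ with a Hermitian $S$; your spectral shortcut has no direct analogue there.
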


\begin{proof}
\textit{(i)} \quad Let $\S$ be a Hermitian unitary $n\times n$ matrix different from $\pm I_n$ and $m$ denote the multiplicity of its eigenvalue $1$. Since $\S\neq\pm I_n$, $m\neq0$ and $m\neq n$. The multiplicity of the eigenvalue $-1$ equals $n-m$, and therefore
$$
\rank(\S+I_n)=n-\dim\Ker(\S+I_n)=n-(n-m)=m\in\{1,\ldots,n-1\}\,.
$$
Hence there is an invertible $M\in\C^{m,m}$ and a permutation matrix $P$ such that
$$
\S+I_n=P\left(\begin{array}{cc}
M & MT_1 \\
T_2M & T_2MT_1
\end{array}\right)P^{-1}\,;
$$
note that $P$ can be omitted if and only if the upper left submatrix $m\times m$ of $\S+I_n$ is invertible.
As $\S=\S^*$, necessarily $M=M^*$ and $(MT_1)^*=T_2M$. Since $M$ is invertible, we have $T_2=T_1^*$. For brevity $T:=T_1$. Since $\S$ is unitary,
\begin{equation*}
\begin{split}
\S\S^*=&I_n+2P\left(\begin{array}{cc}
M & MT \\
T^*M & T^*MT
\end{array}\right)P^{-1}
+P\left(\begin{array}{cc}
M^2+MTT^*M & M^2T+MTT^*MT \\
T^*M^2+T^*MTT^*M & T^*M^2T+T^*MTT^*MT
\end{array}\right)P^{-1} \\
=&I_n\,,
\end{split}
\end{equation*}
hence we obtain $2M+M^2+MTT^*M=0$, equivalently $2M^{-1}=I_m+TT^*$. Consequently, $M=2(I_m+TT^*)^{-1}$.

\textit{(ii)} \quad Any $\S$ given by \eqref{S} obviously satisfies $\S\S^*=I_n$ and $\S=\S^*$.

\textit{(iii)} \quad If $\S$ is given by \eqref{S}, a straightforward calculation gives
$$
\S P\left(\begin{array}{c} I_m \\ T^* \end{array}\right)=P\left(\begin{array}{c} I_m \\ T^* \end{array}\right) \quad\text{and}\quad \S P\left(\begin{array}{c} T \\ -I_{n-m} \end{array}\right)=-P\left(\begin{array}{c} T \\ -I_{n-m} \end{array}\right)\,,
$$
therefore (iii) holds true.
\end{proof}

\begin{remark}\label{P}
Let $\S\neq\pm I_n$ be a Hermitian unitary matrix of order $n$, $m=\rank(\S+I_n)$, and $\S^{(1,1)}$ be the upper left $m\times m$ submatrix of $\S$. It follows from the proof of Theorem~\ref{param S} that the permutation matrix $P$ must be involved in the parametrization~\eqref{S} if and only if $\S^{(1,1)}+I_m$ is singular.
\end{remark}

\begin{remark}
Since the matrix $T$ occurring in the parametrization~\eqref{S} determines the eigenvectors of $\S$, it is related to the diagonalization of $\S$ as well. It follows from Theorem~\ref{param S}~(iii) that
$$
\S = X_m Z_m X_m^{-1}
$$
for
$$
X_m = P\left(\begin{array}{cc} I_m & T \\ T^* & -I_{n-m} \end{array}\right)\,, 
\qquad
Z_m = \left(\begin{array}{cc}I_m & 0 \\ 0 & -I_{n-m} \end{array}\right)\,.
$$
\end{remark}

The main idea of Theorem~\ref{param S} can be extended to a general unitary matrix:
\begin{theorem}\label{param U}
Let $U\in\U(n)$ such that $U\neq-I_n$. Let $n-m$ denote the multiplicity of its eigenvalue $-1$. Then
\begin{itemize}
\item[(i)] If $n-m\neq0$, then there exists a $T\in\C^{m,n-m}$, a Hermitian $S\in\C^{m,m}$ and a permutation matrix $P$ such that
\begin{equation}\label{U}
\begin{split}
U&=-I_n+
2P\left(\begin{array}{c} I_m \\ T^* \end{array}\right)
\left(I_m+TT^*+\i S\right)^{-1}
\left(\begin{array}{cc}
I_m & T
\end{array}\right)P^{-1} \\
&=-I_n+
2P\left(\begin{array}{cc}
\left(I_m+TT^*+\i S\right)^{-1} & 
\left(I_m+TT^*+\i S\right)^{-1}T \\
T^*\left(I_m+TT^*+\i S\right)^{-1} & 
T^*\left(I_m+TT^*+\i S\right)^{-1}T
\end{array}\right)P^{-1}\,,
\end{split}
\end{equation}
and conversely, any matrix given by \eqref{U} is unitary.
\item[(ii)] If $n-m=0$, there exists a Hermitian $S\in\C^{m,m}$ such that $U=-I_n+2\left(I_m+\i S\right)^{-1}$, and conversely, any $U$ given by this formula is unitary.
\end{itemize}
\end{theorem}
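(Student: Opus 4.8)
The plan is to mirror the proof of Theorem~\ref{param S}, treating the case $n-m\neq0$ first and isolating the two places where dropping the Hermiticity assumption of Theorem~\ref{param S} forces a genuinely new argument. Assume $n-m\neq0$, so that $-1$ really is an eigenvalue. Its multiplicity being $n-m$ gives $\rank(U+I^{(n)})=n-(n-m)=m$, and since $U\neq -I^{(n)}$ we have $m\in\{1,\ldots,n-1\}$. A rank-$m$ factorization then furnishes, exactly as in Theorem~\ref{param S}, a regular $M\in\C^{m,m}$, matrices $T_1\in\C^{m,n-m}$ and $T_2\in\C^{n-m,m}$, and a permutation matrix $P$ with
$$
U+I^{(n)}=P\left(\begin{array}{cc} M & MT_1 \\ T_2M & T_2MT_1 \end{array}\right)P^{-1}
=P\left(\begin{array}{c} I^{(m)} \\ T_2 \end{array}\right)M\left(\begin{array}{cc} I^{(m)} & T_1\end{array}\right)P^{-1}\,.
$$
In the Hermitian setting the identity $T_2=T_1^*$ dropped out instantly from $\S=\S^*$; here that shortcut is unavailable, and reproving it is the heart of the matter.

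To get $T_2=T_1^*$ I would argue geometrically. For unitary $U$ we have $U^*=U^{-1}$, so $Uv=-v$ holds iff $U^*v=-v$; hence $\Ker(U+I^{(n)})=\Ker(U^*+I^{(n)})$, and consequently the column space of $U+I^{(n)}$ equals $\Ker(U^*+I^{(n)})^{\perp}=\Ker(U+I^{(n)})^{\perp}$. Thus the range and the kernel of $U+I^{(n)}$ are mutually orthogonal complements. Conjugation by the unitary $P$ preserves orthogonality, so, reading the two subspaces off the factorization—the column space is spanned by the columns of $\left(\begin{smallmatrix} I^{(m)}\\ T_2\end{smallmatrix}\right)$ and the kernel by those of $\left(\begin{smallmatrix} -T_1\\ I^{(n-m)}\end{smallmatrix}\right)$—their orthogonality reads $\left(\begin{smallmatrix} I^{(m)} & T_2^*\end{smallmatrix}\right)\left(\begin{smallmatrix} -T_1\\ I^{(n-m)}\end{smallmatrix}\right)=0$, i.e. $T_2^*=T_1$. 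Writing $T:=T_1$ we obtain $T_2=T^*$, so $U+I^{(n)}=P\left(\begin{smallmatrix} I^{(m)}\\ T^*\end{smallmatrix}\right)M\left(\begin{smallmatrix} I^{(m)} & T\end{smallmatrix}\right)P^{-1}$.

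It remains to pin down $M$ and to run the converse. Set $W:=U+I^{(n)}$; then $UU^*=I^{(n)}$ is equivalent to $W+W^*=WW^*$. Using $\left(\begin{smallmatrix} I^{(m)} & T\end{smallmatrix}\right)\left(\begin{smallmatrix} I^{(m)}\\ T^*\end{smallmatrix}\right)=I^{(m)}+TT^*$ and cancelling the outer factors $\left(\begin{smallmatrix} I^{(m)}\\ T^*\end{smallmatrix}\right)$, $\left(\begin{smallmatrix} I^{(m)} & T\end{smallmatrix}\right)$ (which have full column, resp. row, rank, so admit one-sided inverses), this reduces to $M+M^*=M(I^{(m)}+TT^*)M^*$, equivalently $M^{-1}+(M^{-1})^*=I^{(m)}+TT^*$. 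This is the only structural novelty compared with the Hermitian case: the relation fixes the Hermitian part of $M^{-1}$ to be $\tfrac12(I^{(m)}+TT^*)$ but leaves its anti-Hermitian part free, whence $M^{-1}=\tfrac12\left(I^{(m)}+TT^*+\i S\right)$ for a unique Hermitian $S$, that is $M=2\left(I^{(m)}+TT^*+\i S\right)^{-1}$, which is precisely \eqref{U}. For the converse one verifies $W+W^*=WW^*$ by the same block computation run backwards, using only $S^*=S$; note that $I^{(m)}+TT^*+\i S$ is invertible for every Hermitian $S$ because its Hermitian part $I^{(m)}+TT^*$ is positive definite (the real part of $x^*(I^{(m)}+TT^*+\i S)x$ is at least $\|x\|^2$). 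Finally, the case $n-m=0$ is the degenerate version of the same computation: now $U+I^{(n)}$ is invertible, $m=n$, the block $T$ is absent so $TT^*=0$, and $M^{-1}+(M^{-1})^*=I^{(n)}$ yields $U=-I^{(n)}+2(I^{(n)}+\i S)^{-1}$, the converse being identical. The main obstacle is the middle paragraph—trading the Hermiticity shortcut for the orthogonality argument that forces $T_2=T^*$—and, secondarily, correctly recognising the released anti-Hermitian freedom in $M^{-1}$ as the term $\i S$.
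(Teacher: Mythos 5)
Your proof is correct and follows the paper's overall route --- the same rank-$m$ factorization $U+I^{(n)}=P\bigl(\begin{smallmatrix}I\\T_2\end{smallmatrix}\bigr)M\bigl(\begin{smallmatrix}I&T_1\end{smallmatrix}\bigr)P^{-1}$ followed by imposing unitarity --- but it handles the key identity $T_2=T_1^*$ by a genuinely different argument. The paper (whose proof is a two-line sketch) extracts both $T_2=T_1^*$ and $M=2\left(I^{(m)}+TT^*+\i S\right)^{-1}$ from the block expansion of $UU^*=I^{(n)}$ alone: with $W=U+I^{(n)}$, the condition $W+W^*=WW^*$ gives in the $(1,1)$ block $M+M^*=M(I+T_1T_1^*)M^*$, and substituting this into the $(1,2)$ block $MT_1+M^*T_2^*=M(I+T_1T_1^*)M^*T_2^*$ yields $MT_1=MT_2^*$, hence $T_1=T_2^*$ since $M$ is regular. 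You instead obtain $T_2=T_1^*$ geometrically, before any block computation: for unitary $U$ one has $\Ker(U+I^{(n)})=\Ker(U^*+I^{(n)})$, so the range of $U+I^{(n)}$ is the orthogonal complement of its kernel, and reading both subspaces off the factorization forces $T_2^*=T_1$. Both arguments are valid; yours is more conceptual --- it isolates the normality of $U$ as the reason the relation survives the loss of hermiticity, and would work verbatim for any normal matrix in place of $U$ --- while the paper's stays within a single algebraic computation. Your write-up also supplies two details the paper leaves implicit: the cancellation of the outer factors is justified by their full column (resp.\ row) rank, and $I^{(m)}+TT^*+\i S$ is invertible for every Hermitian $S$ (its Hermitian part $I^{(m)}+TT^*$ is positive definite), without which the converse direction of \eqref{U} would not even be well posed.
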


\begin{proof}
Let $n-m\neq0$. As in the proof of Theorem~\ref{param S}, we start from the decomposition $U+I_n=P\left(\begin{array}{c}I\\T_2\end{array}\right)M\left(\begin{array}{cc}I&T_1\end{array}\right)P^{-1}$, where $M\in\C^{m,m}$ is invertible, and then require $UU^*=I_n$. This leads to $T_2=T_1^*$ and $M=2(I+T_1T_1^*+\i S)^{-1}$ for a certain Hermitian matrix $S$.
If $n-m=0$, the matrix $U+I_n$ is invertible. Let us denote $U+I_n=:M$. Then the requirement $UU^*=I_n$ gives $M=2(I_n+\i S)^{-1}$ for a certain Hermitian $S$.
\end{proof}

\begin{remark}\label{P U}
The idea from Remark~\ref{P} applies to~\eqref{U} as well. The permutation matrix $P$ must be involved in~\eqref{U} if and only if $U^{(1,1)}+I_m$ is singular, where $U^{(1,1)}$ stands for the upper left submatrix $m\times m$ of $U$ and $m=\rank(\U+I_n)$. In case $U^{(1,1)}+I_m$ is invertible, $P$ may be omitted.
\end{remark}

\begin{remark}
The unitary group $\U(n)$ has $n^2$ real parameters. There exist several known parametrizations, i.e., ways how the parameters can be assigned to matrices $U\in\U(n)$, for example~\cite{Mu62,Di82} and many other. In accordance with P.~Di\c{t}\u{a} (cf. e.g.~\cite{Di94}), we call a parametrization \emph{natural} if the involved parameters are free, i.e., there are no supplementary restrictions upon them to enforce unitarity. Our solution~\eqref{U} falls within that class. On the other hand, \eqref{U} has a disadvantage that if the rows and columns of $U$ are not suitably ordered, then a permutation matrix must be brought in, see Remark~\ref{P U}.
\end{remark}

\subsection*{Hermitian solutions of quadratic matrix equations}

The reader may have observed in the proof of Theorem~\ref{param S} that the essential properties of $\S$ that allowed us to obtain the parametrization~\eqref{S} were the following two: the hermiticity of $\S$ and the fact that $\S$ has only two eigenvalues. In the light of this idea, we will generalize the parametrization~\eqref{S}, originally developped for Hermitian unitary matrices (i.e., solutions of $\S^2=I$), to Hermitian solutions of more general matrix quadratic equations
\begin{equation}\label{MatPol}
H^2=aI+bH \qquad (a,b\in\R)\,.
\end{equation}

We observe at first that the eigenvalues of any solution $H$ of~\eqref{MatPol} must satisfy $\lambda^2=a+b\lambda$, hence $\sigma(H)=\{\lambda_1,\lambda_2\}$ where $\lambda_{1,2}=\frac{1}{2}\left(b\pm\sqrt{b^2+4a}\right)$. Since $\lambda_{1,2}$ are real due to the hermiticity of $H$, one has to assume $a,b\in\R$ and $4a+b^2\geq0$. Note that the case $4a+b^2=0$ is not interesting, because it represents the situation when any Hermitian solution of~\eqref{MatPol} has the eigenvalue $b/2$ with multiplicity $n$, thus $H=\frac{b}{2}I$. For these reasons we shall assume the strict inequality $4a+b^2>0$.

Let us transform Equation~\eqref{MatPol} into its equivalent form
$$
\left(H-\frac{b}{2}I\right)^2=\left(a+\frac{b^2}{4}\right)I
$$
and define
$$
M:=\frac{2}{\sqrt{4a+b^2}}\left(H-\frac{b}{2}I\right)\,.
$$
Matrix $M$ is Hermitian (because $H$ is Hermitian) and at the same time unitary, since it satisfies $M^2=I$. Therefore we can apply Theorem~\ref{param S} and in this way obtain the sought parametrization of $H$, see Theorem~\ref{Quadratic} below. We remark that the trivial solutions of~\eqref{MatPol}, namely $H=\frac{1}{2}\left(b\pm\sqrt{4a+b^2}\right)I$, are excluded from the parametrization, just as $\S=\pm I$ have been excluded in Theorem~\ref{param S}.

\begin{theorem}\label{Quadratic}
Let $a,b\in\R$, $4a+b^2>0$.
\begin{itemize}
\item[(i)] A Hermitian $n\times n$ matrix $H$ different from $\frac{1}{2}\left(b\pm\sqrt{4a+b^2}\right)I_n$ satisfies $H^2=aI_n+bH$ if and only if
\begin{equation}\label{H}
H=
\frac{b-\sqrt{4a+b^2}}{2}I_n+
\sqrt{4a+b^2}\cdot P\left(\begin{array}{c} I_m \\ T^* \end{array}\right)
\left(I_m+TT^*\right)^{-1}
\left(\begin{array}{cc}
I_m & T
\end{array}\right)P^{-1}
\end{equation}
for an $m\in\{1,\ldots,n-1\}$, a matrix $T\in\C^{m,n-m}$ and a permutation matrix $P$.
\item[(ii)] If $H$ is given by \eqref{H}, then the columns of the matrices
$$
P\left(\begin{array}{c} I_m \\ T^* \end{array}\right) \qquad\text{and}\qquad P\left(\begin{array}{c} T \\ -I_{n-m} \end{array}\right)
$$
are eigenvectors of $H$ corresponding to the eigenvalues $\frac{b+\sqrt{4a+b^2}}{2}$ and $\frac{b-\sqrt{4a+b^2}}{2}$, respectively.
\end{itemize}
\end{theorem}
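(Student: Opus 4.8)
The plan is to reduce the statement entirely to Theorem~\ref{param S} via the affine substitution already introduced above the theorem. Writing $c:=\sqrt{4a+b^2}>0$ and $M:=\frac{2}{c}\left(H-\frac{b}{2}I^{(n)}\right)$, one has the inverse relation $H=\frac{c}{2}M+\frac{b}{2}I^{(n)}$. A direct computation shows that $M$ is Hermitian unitary (equivalently $M^2=I^{(n)}$) if and only if $H$ is Hermitian and satisfies $H^2=aI^{(n)}+bH$: indeed $M^2=I^{(n)}$ rewrites as $\left(H-\frac{b}{2}I^{(n)}\right)^2=\frac{c^2}{4}I^{(n)}$, i.e.\ $H^2-bH=aI^{(n)}$. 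Moreover $H\mapsto M$ is an affine bijection sending the excluded trivial solutions $H=\frac{1}{2}(b\pm c)I^{(n)}$ precisely to the excluded matrices $M=\pm I^{(n)}$. Hence everything will follow by translating Theorem~\ref{param S} through this substitution.

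For part~(i), forward direction, I would take a Hermitian $H$ with $H^2=aI^{(n)}+bH$ and $H\neq\frac{1}{2}(b\pm c)I^{(n)}$, form the corresponding $M$, observe that $M\neq\pm I^{(n)}$, and apply Theorem~\ref{param S}(i) to obtain an $m\in\{1,\ldots,n-1\}$, a matrix $T\in\C^{m,n-m}$ and a permutation matrix $P$ for which $M$ equals the right-hand side of~\eqref{S}. Substituting this expression into $H=\frac{c}{2}M+\frac{b}{2}I^{(n)}$ and simplifying $\frac{c}{2}(-I^{(n)})+\frac{b}{2}I^{(n)}=\frac{b-c}{2}I^{(n)}$ yields exactly formula~\eqref{H}. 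For the converse, given any $H$ of the form~\eqref{H}, the associated $M=\frac{2}{c}\left(H-\frac{b}{2}I^{(n)}\right)$ is precisely the matrix~\eqref{S} built from the same $m,T,P$; by Theorem~\ref{param S}(ii) it is therefore Hermitian unitary, and the equivalence established in the first paragraph gives that $H$ is Hermitian and solves $H^2=aI^{(n)}+bH$.

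For part~(ii), I would simply invoke Theorem~\ref{param S}(iii): the columns of $P\left(\begin{smallmatrix} I^{(m)} \\ T^* \end{smallmatrix}\right)$ and $P\left(\begin{smallmatrix} T \\ -I^{(n-m)} \end{smallmatrix}\right)$ are eigenvectors of $M$ associated with the eigenvalues $1$ and $-1$. Since $H=\frac{c}{2}M+\frac{b}{2}I^{(n)}$ is an affine function of $M$, these same columns are eigenvectors of $H$, now with eigenvalues $\frac{c}{2}\cdot 1+\frac{b}{2}=\frac{b+c}{2}$ and $\frac{c}{2}\cdot(-1)+\frac{b}{2}=\frac{b-c}{2}$, i.e.\ the asserted values $\frac{b\pm\sqrt{4a+b^2}}{2}$.

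The argument is almost entirely bookkeeping, so I expect no serious obstacle. The only points requiring care are verifying that the algebraic equivalence $M^2=I^{(n)}\Leftrightarrow H^2=aI^{(n)}+bH$ holds with the correct constant $\frac{c^2}{4}=\frac{4a+b^2}{4}$, and confirming that the trivial solutions excluded in each formulation correspond to one another under the substitution, so that the stated ``if and only if'' is genuinely an equivalence between the two non-trivial families. The hypothesis $4a+b^2>0$ is exactly what guarantees that $c$ is real and positive, making the substitution $M=\frac{2}{c}\left(H-\frac{b}{2}I^{(n)}\right)$ well defined and invertible.
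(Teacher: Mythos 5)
Your proof is correct and follows essentially the same route as the paper: the authors prove Theorem~\ref{Quadratic} precisely by the substitution $M=\frac{2}{\sqrt{4a+b^2}}\left(H-\frac{b}{2}I\right)$, noting that $M$ is Hermitian unitary exactly when $H$ is a Hermitian solution of $H^2=aI+bH$, that the trivial solutions correspond to $M=\pm I$, and then invoking Theorem~\ref{param S}. Your write-up merely spells out the bookkeeping (including the affine transport of eigenvalues in part (ii)) in more detail than the paper does.
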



\section{Modular permutation symmetry}\label{Complex}

In the following part of the paper we will study Hermitian unitary MPS matrices. Prior to that, let us bring in a proposition characterizing the set of Hermitian unitary \emph{permutation-symmetric} matrices (cf.~\cite{ET06}):
\begin{proposition}
A unitary $n\times n$ matrix $U$ is permutation-symmetric if and only if $U=aI_n+bJ_n$ for $a,b\in\C$ satisfying $|a|=1$ and $|a+nb|=1$. Moreover, if $U$ is Hermitian, then $U=\pm(I_n-\frac{2}{n}J_n)$.
\end{proposition}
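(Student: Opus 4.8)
The plan is to exploit the fact that a permutation-symmetric matrix is nothing but a linear combination of $I^{(n)}$ and $J^{(n)}$, both of which are Hermitian with a common, explicitly known eigenbasis. First I would record that, by Definition~\ref{MPS}(i), a matrix $U$ is permutation-symmetric if and only if it can be written as $U=aI^{(n)}+bJ^{(n)}$ for some $a,b\in\C$: the off-diagonal entries then all equal $b$ and the diagonal entries all equal $a+b$, and conversely every such combination is permutation-symmetric. This settles the structural side of the equivalence for free, so that the entire content of the first assertion is the translation of unitarity into the two modulus conditions.

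The key step is the spectral analysis of $J^{(n)}$. Since $J^{(n)}$ is real symmetric it is unitarily diagonalizable, with eigenvalue $n$ on the one-dimensional eigenspace spanned by $\mathbf{1}=(1,\ldots,1)^{T}$ and eigenvalue $0$ of multiplicity $n-1$ (the hyperplane of vectors whose entries sum to zero). Consequently $U=aI^{(n)}+bJ^{(n)}$ is normal and shares this eigenbasis, with eigenvalues $a+nb$ (simple) and $a$ (multiplicity $n-1$). A unitary matrix has all its eigenvalues on the unit circle, which gives the necessity of $|a|=1$ and $|a+nb|=1$; conversely, a normal matrix with all eigenvalues on the unit circle is unitary, so these two conditions are also sufficient. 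This establishes the first equivalence in both directions simultaneously.

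For the Hermitian refinement I would impose $U=U^{*}$. Because $I^{(n)}$ and $J^{(n)}$ are linearly independent (for $n\geq2$) and both real symmetric, the identity $aI^{(n)}+bJ^{(n)}=\bar aI^{(n)}+\bar bJ^{(n)}$ forces $a,b\in\R$. Together with $|a|=1$ and $|a+nb|=1$ this yields $a=\pm1$ and $a+nb=\pm1$, a short finite case analysis whose solutions are $b\in\{0,\,-2a/n\}$. Thus $U\in\{I^{(n)},\,-I^{(n)},\,\pm(I^{(n)}-\tfrac{2}{n}J^{(n)})\}$, and discarding the diagonal solutions $\pm I^{(n)}$ (they have vanishing off-diagonal entries and are excluded by the standing convention, fixed just before the proposition, that the off-diagonal modulus be nonzero) leaves exactly $U=\pm(I^{(n)}-\frac{2}{n}J^{(n)})$.

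I do not anticipate a genuine obstacle: the argument is essentially the spectral decomposition of a rank-one perturbation of a multiple of the identity, and normality is precisely what lets unitarity be read off from the eigenvalues. The only point requiring a little care is the bookkeeping in the Hermitian case—verifying that the sign combinations collapse to the two nontrivial matrices claimed, and that the purely diagonal solutions are the ones set aside by the paper's convention rather than by the algebra itself.
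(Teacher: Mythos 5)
Your proposal is correct, but note that the paper itself does not prove this proposition at all: it is stated with a ``cf.~\cite{ET06}'' and imported from the literature, so there is no internal argument to compare against. Your proof is a legitimate self-contained replacement. The three ingredients all check out: (1) the reparametrization (diagonal $=a+b$, off-diagonal $=b$) correctly identifies permutation-symmetric matrices with the span of $I^{(n)}$ and $J^{(n)}$; (2) since $aI^{(n)}+bJ^{(n)}$ is normal, unitarity is equivalent to its eigenvalues $a$ (on the hyperplane $\mathbf{1}^{\bot}$) and $a+nb$ (on $\mathbb{C}\mathbf{1}$) lying on the unit circle, and this works even in the degenerate case $b=0$ where the two eigenvalues coincide; (3) the Hermitian case reduces, via linear independence of $I^{(n)}$ and $J^{(n)}$ for $n\geq2$, to the finite sign check giving $\pm I^{(n)}$ and $\pm\bigl(I^{(n)}-\tfrac{2}{n}J^{(n)}\bigr)$. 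Your handling of the last point is the one place where care is genuinely needed: as literally stated the proposition is false without discarding $\pm I^{(n)}$, and you correctly invoke the paper's standing convention (fixed just after Definition~\ref{MPS}) that the off-diagonal modulus is nonzero; this is also what makes the follow-up sentence about $d=\frac{n}{2}-1$ meaningful. Incidentally, the spectral fact about $J^{(n)}$ that drives your argument is exactly what the paper later records as Observation~\ref{e.v. J} and uses in Proposition~\ref{d>n/6-1}, so your route is fully in the spirit of the paper's own toolkit; the plausible alternative (and likely the route of~\cite{ET06}) is a direct expansion of $UU^{*}=|a|^{2}I+(a\bar b+\bar a b+n|b|^{2})J=I$, which yields the same two modulus conditions after a small algebraic manipulation but obscures why exactly the two quantities $a$ and $a+nb$ appear.
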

We see that only two permutation-symmetric Hermitian unitary matrices exist,
both corresponding to $d=\frac{n}{2}-1$.
However, once the permutation symmetry is weakened to the \emph{modular permutation symmetry},
there is much more freedom for $d$, as we shall see.

In this section we will examine general properties of Hermitian unitary MPS matrices, in particular necessary conditions of their existence, whereas sufficient conditions and concrete examples of such matrices will be presented in Section~\ref{Construction}.

\begin{proposition}\label{range of r}
Let $\S\in\MC_n(d)$. If $n>2$, then $d\leq\frac{n}{2}-1$.
\end{proposition}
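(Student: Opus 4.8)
The plan is to exploit the identity $\S^2=I^{(n)}$, which holds because $\S$ is Hermitian and unitary, and to extract information from its off-diagonal entries. Writing $\S=\frac{1}{c}A$ with $c=\sqrt{d^2+n-1}$, where $A$ is the Hermitian matrix with diagonal entries $\pm d$ and off-diagonal entries of modulus $1$, the condition $\S^2=I^{(n)}$ becomes $A^2=c^2 I^{(n)}$. In particular, for any pair of distinct indices $j,l$ the $(j,l)$-entry of $A^2$ must vanish, which, after separating the two terms $k=j$ and $k=l$ from the remaining $n-2$ terms, reads
\begin{equation*}
A_{jl}\left(A_{jj}+A_{ll}\right)+\sum_{k\neq j,l}A_{jk}A_{kl}=0\,.
\end{equation*}

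First I would locate a pair of indices whose diagonal entries carry the same sign. Each $A_{jj}$ equals either $+d$ or $-d$, so for $n>2$ the pigeonhole principle guarantees that at least two of the $n$ diagonal entries coincide; fixing such a pair $j,l$ gives $A_{jj}+A_{ll}=\pm 2d$, and hence $\left|A_{jl}\left(A_{jj}+A_{ll}\right)\right|=2d$ since $|A_{jl}|=1$.

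I would then apply the triangle inequality to the displayed identity. The sum $\sum_{k\neq j,l}A_{jk}A_{kl}$ consists of $n-2$ terms, each of modulus $1$, so its modulus is at most $n-2$, whence
\begin{equation*}
2d=\left|A_{jl}\left(A_{jj}+A_{ll}\right)\right|=\left|\sum_{k\neq j,l}A_{jk}A_{kl}\right|\leq n-2\,,
\end{equation*}
which gives $d\leq\frac{n}{2}-1$.

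No serious obstacle arises once the off-diagonal relation is in place; the single point where the hypothesis is genuinely used is the existence of a same-sign diagonal pair, which demands $n>2$. For $n=2$ one may place $+d$ and $-d$ on the diagonal, so both the argument and the bound itself fail, consistently with the statement being restricted to $n>2$.
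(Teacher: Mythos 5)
Your proof is correct and takes essentially the same route as the paper: both arguments use the pigeonhole principle to find two diagonal entries of equal sign, then set the corresponding off-diagonal entry of $\S^2=\S\S^*$ to zero and bound the resulting sum of $n-2$ unimodular terms by the triangle inequality. The only cosmetic difference is that you work with the rescaled matrix $A=\sqrt{d^2+n-1}\,\S$ and keep both sign cases, while the paper works with $\S$ itself and normalizes to the $+$ sign by passing to $-\S$.
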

\begin{proof}
The diagonal entries of $\S$ are $+r$ and $-r$ for $r=\frac{d}{\sqrt{d^2+n-1}}$. Since $n>2$, at least two of them are equal, we may suppose without loss of generality that $\S_{11}=\S_{22}$. Moreover, we assume $\S_{11}=+r$; alternatively we would work with the equivalent matrix $-\S$. The unitarity of $\S$ requires $[\S\S^*]_{12}=0$, where
$$
\left[\S\S^*\right]_{12}=\S_{11}\overline{\S_{21}}+\S_{12}\overline{\S_{22}}+\sum_{j=3}^{n}\S_{1j}\overline{\S_{2j}}\,.
$$
Let us denote $\S_{jk}=t\e^{\i\alpha_{jk}}$ for $t=\frac{1}{\sqrt{d^2+n-1}}$. Since $\S$ is Hermitian, it holds $\overline{\S_{21}}=\S_{12}$. Therefore, the condition $[\S\S^*]_{12}=0$ leads to
$$
2rt\e^{\i\alpha_{12}}+t^2\sum_{j=3}^{n}\e^{\i(\alpha_{1j}-\alpha_{2j})}=0\,,
$$
hence
$$
\frac{r}{t}=-\frac{\e^{-\i\alpha_{12}}}{2}\sum_{j=3}^{n}\e^{\i(\alpha_{1j}-\alpha_{2j})}\,.
$$
Consequently,
$$
d=\frac{r}{t}\leq\frac{1}{2}(n-2)=\frac{n}{2}-1\,.
$$
\end{proof}

Now we derive a relation between $d$ and the signs of the diagonal entries of $\S$.

\begin{proposition}\label{Trace}
Let $\S\in\MC_n(d)$, let $p$ denote the number of its non-negative diagonal entries, and let $m$ be the multiplicity of its eigenvalue $1$. Then
\begin{equation}\label{TraceEquality}
2m-n=(2p-n)\frac{d}{\sqrt{d^2+n-1}}\,.
\end{equation}
\end{proposition}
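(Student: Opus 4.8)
The plan is to compute the trace of $\S$ in two different ways and equate the two results; the identity \eqref{TraceEquality} then emerges immediately.

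First I would invoke Observation~\ref{spectrum}, which guarantees that the eigenvalues of the Hermitian unitary matrix $\S$ all lie in $\{-1,1\}$. By hypothesis the eigenvalue $1$ has multiplicity $m$, so the eigenvalue $-1$ must have multiplicity $n-m$. Since the trace is invariant under unitary conjugation and hence equals the sum of the eigenvalues counted with multiplicity, this yields
$$
\Tr\S = m\cdot 1 + (n-m)\cdot(-1) = 2m-n\,.
$$

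Next I would evaluate the same trace directly as the sum of the diagonal entries of $\S$. By the structure of the matrices in $\MC_n(d)$, every diagonal entry equals $\pm r$ with $r=\frac{d}{\sqrt{d^2+n-1}}$. By the definition of $p$, exactly $p$ of these entries are non-negative, hence equal to $+r$, while the remaining $n-p$ equal $-r$. Therefore
$$
\Tr\S = p\cdot r + (n-p)\cdot(-r) = (2p-n)\,\frac{d}{\sqrt{d^2+n-1}}\,.
$$
Equating the two expressions for $\Tr\S$ gives precisely \eqref{TraceEquality}.

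The argument is essentially a double counting of the trace, so there is no serious obstacle. The only point requiring a little care is the sign bookkeeping: I would confirm that when $d>0$ a diagonal entry $+r$ is genuinely positive and $-r$ genuinely negative, so that $p$ indeed counts the entries equal to $+r$. The degenerate case $d=0$ (where all diagonal entries vanish and are counted as non-negative, so $p=n$) is consistent as well, since both sides of \eqref{TraceEquality} then reduce to $0$ and force $m=n/2$, in agreement with the vanishing trace of a conference matrix.
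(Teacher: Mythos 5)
Your proof is correct and is essentially identical to the paper's own argument: both compute $\Tr\S$ once via the $\pm1$ eigenvalue multiplicities (using Observation~\ref{spectrum}) and once via the $\pm\frac{d}{\sqrt{d^2+n-1}}$ diagonal entries, then equate the two. Your closing remarks on sign bookkeeping and the $d=0$ case are fine additional sanity checks but not needed.
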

\begin{proof}
Since $\S\in\MC_n(d)$, its diagonal entries are $\pm\frac{d}{\sqrt{d^2+n-1}}$. According to the assumptions, $\Tr(\S)=p\frac{d}{\sqrt{d^2+n-1}}+(n-p)\left(-\frac{d}{\sqrt{d^2+n-1}}\right)=(2p-n)\frac{d}{\sqrt{d^2+n-1}}$. On the other hand, since $\S$ is unitary and at the same time Hermitian, its eigenvalues are from the set $\{1,-1\}$, see Observation~\ref{spectrum}. The multiplicity of $1$ is $m$, the multiplicity of $-1$ is $n-m$, hence $\Tr(\S)=m\cdot1+(n-m)\cdot(-1)=2m-n$. Comparing these two expressions for $\Tr(\S)$ we obtain Equation~\eqref{TraceEquality}.
\end{proof}

\begin{notation}\label{mp}
From now on to the end of the paper, the symbols $m$ and $p$ are reserved for the multiplicity of the eigenvalue $1$ and the number of non-negative diagonal elements, respectively, of matrices $\S\in\MC_n(d)$. 
\end{notation}

\begin{example}
We demonstrate the use of formula~\eqref{TraceEquality} on the extremal values of $d$, namely $d=0$ (conference matrices) and $d=\frac{n}{2}-1$.
\begin{itemize}
\item Let $d=0$. Then Equation~\eqref{TraceEquality} gives $m-2n=0$. Consequently, complex Hermitian conference matrices exist only for even $n$.
\item Let $d=\frac{n}{2}-1$. Then Equation~\eqref{TraceEquality} takes the form $2m-n=(2p-n)\left(1-\frac{2}{n}\right)$, which is equivalent to $\frac{2p}{n}=p+1-m$. Since $m$ and $p$ are integers, the quantity $\frac{2p}{n}$ must be an integer as well. Furthermore, it holds $0\leq p\leq n$ by definition of $p$. Hence we obtain three possible values of $p$: $p=0$, $p=n$, and $p=\frac{n}{2}$, the third one only for even $n$. They correspond to the following solutions of Equation~\eqref{TraceEquality}: $(m,p)=(1,0)$, $(m,p)=(n-1,n)$, and $(m,p)=(\frac{n}{2},\frac{n}{2})$. These three solutions together with the parametrization of unitary matrices \eqref{S} can be used to an easy construction of all elements of $\MC_n(\frac{n}{2}-1)$, cf. also~\cite{CT10}.
\end{itemize}
\end{example}

\begin{theorem}\label{m,p,d}
Let $\S\in\MC_n(d)$ and $m,p$ have the usual meaning (see Notation~\ref{mp}). Then either
(i) $p=m=\frac{n}{2}$
or (ii) $\bigl(p<m<\frac{n}{2}\bigr)\vee\bigl(p>m>\frac{n}{2}\bigr)$. In case (ii) it holds $d=\left|m-\frac{n}{2}\right|\sqrt{\frac{n-1}{(p-m)(p+m-n)}}$.
\end{theorem}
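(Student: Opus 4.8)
The plan is to extract the entire statement from the single scalar identity of Proposition~\ref{Trace}, namely $2m-n=(2p-n)\frac{d}{\sqrt{d^2+n-1}}$, treating it as a constraint on the real number $r:=\frac{d}{\sqrt{d^2+n-1}}$. First I would record the elementary bounds $0\le r<1$, valid for $n\ge2$ because then $\sqrt{d^2+n-1}>\sqrt{d^2}=d\ge0$, together with the equivalence $r>0\Leftrightarrow d>0$. The identity becomes $2m-n=(2p-n)r$, and everything reduces to reading off the consequences of this one equation under the restriction $r\in[0,1)$; notably the integrality of $m$ and $p$ plays no role here, only the confinement of $r$.

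Second, I would split on whether $2p-n$ vanishes. If $2p-n=0$ the identity forces $2m-n=0$, i.e.\ $p=m=\frac n2$, which is the first alternative. If $2p-n\neq0$ I would solve $r=\frac{2m-n}{2p-n}$ and use $0<r<1$ (this is where $d>0$ enters). The inequality $r>0$ forces $2m-n$ and $2p-n$ to be nonzero and of equal sign, while $r<1$ forces $|2m-n|<|2p-n|$. Treating the two sign cases separately --- both positive giving $\frac n2<m<p$, both negative giving $p<m<\frac n2$ --- produces exactly the two chains $p>m>\frac n2$ and $p<m<\frac n2$. This inequality bookkeeping is the part needing the most care; it also secures $(p-m)(p+m-n)>0$, so that the formula below is well defined.

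Third, for the explicit value of $d$ I would merely invert the relation between $d$ and $r$. Squaring $r=\frac{d}{\sqrt{d^2+n-1}}$ gives $d^2=\frac{r^2(n-1)}{1-r^2}$; substituting $r=\frac{2m-n}{2p-n}$ and using
$$(2p-n)^2-(2m-n)^2=4(p-m)(p+m-n)$$
turns this into $d^2=\frac{(2m-n)^2(n-1)}{4(p-m)(p+m-n)}=\frac{\left(m-\frac n2\right)^2(n-1)}{(p-m)(p+m-n)}$, and taking the positive square root (legitimate by the positivity just established) yields $d=\left|m-\frac n2\right|\sqrt{\frac{n-1}{(p-m)(p+m-n)}}$, as claimed. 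The only genuinely delicate point is the borderline $d=0$ (conference matrices), which forces $r=0$ and hence $m=\frac n2$: there the vanishing diagonal makes the sign of the diagonal entries, and thus $p$, a matter of convention, so this degenerate case falls outside the strict dichotomy and is best handled separately, as was already done for conference matrices above. Everything else is routine manipulation of the trace identity.
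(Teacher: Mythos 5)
Your proof is correct and follows essentially the same route as the paper: both extract the whole theorem from the trace identity of Proposition~\ref{Trace} by viewing it as a constraint on $r=d/\sqrt{d^2+n-1}$, splitting on whether $2p-n$ vanishes, using $0\le r<1$ for the sign/inequality bookkeeping, and inverting $r\mapsto d$ to get the closed formula. If anything, your write-up is more careful than the paper's own: you state the inequality in the correct direction ($|2p-n|>|2m-n|>0$, whereas the paper's proof misprints it as $|2m-n|>|2p-n|>0$), and you explicitly flag the degenerate case $d=0$ with $2p-n\neq0$ (conference matrices, where $m=\frac n2$ but $p=n$), which the paper's stated dichotomy silently omits.
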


\begin{proof}
Any $\S\in\MC_n(d)$ satisfies Equation~\eqref{TraceEquality}, which yields the following alternative:
\begin{itemize}
\item[(i)] $2m-n=2p-n=0$, i.e., $p=m=\frac{n}{2}$.
\item[(ii)] $|2m-n|>|2p-n|>0$ and at the same time $\frac{d}{\sqrt{d^2+n-1}}=\frac{2m-n}{2p-n}$. It can be written equivalently as $d=\left|m-\frac{n}{2}\right|\sqrt{\frac{n-1}{(p-m)(p+m-n)}}$, where moreover $p$ and $m$ must satisfy $p>m>\frac{n}{2}$ or $p<m<\frac{n}{2}$.
\end{itemize}
\end{proof}

\begin{remark}
Let $\S\in\MC_n(d)$ for $d\notin\left\{\left(m-\frac{n}{2}\right)\sqrt{\frac{n-1}{(p-m)(p+m-n)}}\,,\,\frac{n}{2}<m<p\leq n\right\}$. Then, with regard to Theorem~\ref{m,p,d}, $m=p=\frac{n}{2}$, which means in particular that $n$ must be even.
\end{remark}

We finish the section with a remark on a matrices $\S$ with $m=\frac{n}{2}$.
\begin{remark}\label{normal T}
Let $\S\in\MC_n(d)$ for $m=\frac{n}{2}$.
Due to Theorem~\ref{param S}~(i), there exists a square matrix $T\in\C^{m,m}$ such that
$$
\S\sim
\left(\begin{array}{cc}
-I_m+2\left(I_m+TT^*\right)^{-1} & 
2\left(I_m+TT^*\right)^{-1}T \\
2T^*\left(I_m+TT^*\right)^{-1} & 
-I_m+2T^*\left(I_m+TT^*\right)^{-1}T
\end{array}\right)\,.
$$
Among all matrices of this type, those with normal $T$ are particularly useful, because in such a case
$-I_m+2T^*\left(I_m+TT^*\right)^{-1}T=-\left[-I_m+2\left(I_m+TT^*\right)^{-1}\right]$, and consequently
\begin{equation}\label{FG}
\S\sim\left(\begin{array}{cc}
F & G \\
G^* & -F
\end{array}\right)\,,
\end{equation}
where $F=-I_m+2\left(I_m+TT^*\right)^{-1}$ and $G=2\left(I_m+TT^*\right)^{-1}T$\,.
We will take advantage of the special form~\eqref{FG} of matrices $\S$ in the following section.
\end{remark}


\section{Construction of Hermitian unitary MPS matrices}\label{Construction}

Let us propose several ways how matrices $\MC_n(d)$ can be constructed for certain values of $d$.

First of all, for $n=2$ there exists an $\S\in\MC_2(d)$ for any $d>0$. Moreover, $\S$ can be always chosen real: $\S=\frac{1}{\sqrt{d^2+1}}\left(\begin{array}{cc} d & 1 \\ 1 & -d \end{array}\right)$.

\medskip

\emph{From now on let $n>2$.} Now $d$ is bounded from above by $\frac{n}{2}-1$ (Prop.~\ref{range of r}). With regard to this fact, we will structure our presentation according to the value of $d$, starting from the upper bound. The proposed matrix constructions mostly satisfy $m=p=\frac{n}{2}$, and will be moreover based on the scheme~\eqref{FG} from Remark~\ref{normal T}; recall that any setting different from $m=p=\frac{n}{2}$ would lead to a significant restriction on the admissible values of $d$, see Theorem~\ref{m,p,d}.

\subsection*{Case $d=\frac{n}{2}-1$}
\emph{There exists an $\S\in\MC_n(\frac{n}{2}-1)$ for all $n>2$, and $\S$ can be chosen real.}

According to \cite{CT10}, any $\S\in\MC_n(\frac{n}{2}-1)$ is equivalent either to $I_n-\frac{2}{n}J_n$ or to the matrix obtained from~\eqref{UpperInt} below by setting $\alpha=0$.

\subsection*{Case $d\in\left[\frac{n}{2}-3,\frac{n}{2}-1\right)$}
\emph{There exists an $\S\in\MC_n(d)$ for all even $n\in\N$.} For example, set $m=\frac{n}{2}$ and
\begin{equation}\label{UpperInt}
\S=\frac{1}{\sqrt{d^2+n-1}}\left(\begin{array}{cc}
(d+1)I_m-J_m & (\e^{\i\alpha}-1)I_m+J_m \\
(\e^{-\i\alpha}-1)I_m+J_m & -(d+1)I_m+J_m
\end{array}\right)\,,
\end{equation}
where $\alpha$ is chosen so that $\cos\alpha=d+2-\frac{n}{2}$.

\subsection*{Case $d\in\left(\frac{n}{4}-\frac{3}{2},\frac{n}{2}-3\right)$}

\emph{If there exists a symmetric $(v,k,\lambda)$-design for $v=\frac{n}{2}$, then there exists an $\S\in\MC_n(d)$ for all $d\in\left[\frac{n}{2}-1-2(k-\lambda),\frac{n}{2}-1\right]$.}
The statement follows from Proposition~\ref{S incidence} below.

\begin{proposition}\label{S incidence}
Let $A$ be the incidence matrix of a symmetric $(v,k,\lambda)$-design, $\alpha\in[0,2\pi)$ and $G$ be the $v\times v$ matrix given as
$$
G_{j\ell}=\e^{\i\alpha A_{j\ell}} \qquad \text{for all } j,\ell=1,\ldots,n\,.
$$
If $d=v-1-(k-\lambda)(1-\cos\alpha)$, then
\begin{equation}\label{S G}
\S=\frac{1}{\sqrt{d^2+2v-1}}\left(\begin{array}{cc}
(d+1)I_v-J_v & G \\
G^* & -(d+1)I_v+J_v
\end{array}\right)
\end{equation}
satisfies $\S\in\MC_{2v}(d)$.
\end{proposition}
\begin{proof}
It suffices to prove that $\S\S^*=I_{2v}$. With regard to~\eqref{S G},
$$
\S\S^*=\frac{1}{d^2+2v-1}\begin{pmatrix}
(d+1)^2I_v+[v-2(d+1)]J_v+GG^* & GJ_v-J_vG \\
J_vG^*-G^*J_v & (d+1)^2I_v+[v-2(d+1)]J_v+G^*G
\end{pmatrix}\,.
$$
Therefore, we shall prove
\begin{equation}\label{GJ}
GJ_v-J_vG=0 \qquad\text{and}\qquad (d+1)^2I_v+[v-2(d+1)]J_v+GG^*=(d^2+2v-1)I_v\,.
\end{equation}
The matrix $A$, being the incidence matrix of a symmetric $(v,k,\lambda)$-design, has the following properties:
\begin{itemize}
\item[(a)] every row and every column of $A$ contains $k$ entries $+1$ and $(v-k)$ entries $0$,
\item[(b)] the multiset $\{A_{ji}-A_{\ell i}\,|\,i=1,\ldots,m\}$ ($j\neq\ell$) equals $\{\underbrace{1,\ldots,1}_{k-\lambda},\underbrace{-1,\ldots,-1}_{k-\lambda},\underbrace{0,\ldots,0}_{v-2(k-\lambda)}\}$.
\end{itemize}
From (a) it follows $[GJ_v]_{j\ell}=[J_vG]_{j\ell}=k\cdot\e^{\i\alpha}+v-k$ for all $j,\ell$, hence $GJ_v-J_vG=0$. From (b) we obtain $[GG^*]_{j\ell}=v-2(k-\lambda)(1-\cos\alpha)$ for all $j\neq\ell$. Since $[GG^*]_{jj}=v$, we have $GG^*=vI_v+\left[v-2(k-\lambda)(1-\cos\alpha)\right](J_v-I_v)$. These facts together with the assumption $d=v-1-(k-\lambda)(1-\cos\alpha)$ prove equations~\eqref{GJ}, hence $\S\S^*=I_{2v}$.
\end{proof}

If $\alpha$ runs over $[0,2\pi)$, the quantity $d=v-1-(k-\lambda)(1-\cos\alpha)$ attains all values in the interval $\left[v-1-2(k-\lambda),v-1\right]$. Consequently, if a symmetric $(v,k,\lambda)$-design for $v=\frac{n}{2}$ is known, then 
Proposition~\ref{S incidence} allows to construct matrices $\S\in\MC_n(d)$ for all $d\in\left[\frac{n}{2}-1-2(k-\lambda),\frac{n}{2}-1\right]$.

Note, however, that for any symmetric $(v,k,\lambda)$-design, the value $k-\lambda$ is bounded from above by $\frac{v+1}{4}$ (see, e.g.,~\cite{An97}, Thm. 3.1.2). Therefore, the construction~\eqref{S G} works only for $d\geq\frac{n}{4}-\frac{3}{2}$. The value $k-\lambda$ attains the maximum $\frac{v+1}{4}$ for a symmetric $(v,\frac{v}{2}-\frac{1}{2},\frac{v}{4}-\frac{3}{4})$-design.
Such combinatorial design is called \emph{Hadamard design of order $\frac{v+1}{4}$}, and its existence is equivalent to the existence of an Hadamard matrix of order $v+1$ (\cite{An97}, Thm. 3.2.4). Hence we get the following corollary of Proposition~\ref{S incidence}.

\begin{corollary}
If there exists an Hadamard matrix of order $\frac{n}{2}+1$, then there exists an $\S\in\MC_n(d)$ for all $d\in\left[\frac{n}{4}-\frac{3}{2},\frac{n}{2}-1\right]$.
\end{corollary}

The required incidence matrix of a Hadamard design, used for constructing $\S$ (cf. Proposition~\ref{S incidence}), can be obtained by the formula $A=\frac{1}{2}\left(K_H+J_\frac{n}{2}\right)$, where $K_H$ is the \emph{core} of an Hadamard matrix of order $\frac{n}{2}+1$; see the definition below.

\begin{definition}
\textit{(i)}\quad Let $H$ be an Hadamard matrix of order $N$ having the form
\begin{equation}\label{standardH}
H=\left(\begin{array}{c|ccc}
1 & 1 & \cdots & 1 \\
\hline
1 & \\
\vdots & & K_H \\
1 &
\end{array}\right)\,.
\end{equation}
The $(N-1)\times(N-1)$ matrix $K_H$ is called the core of the Hadamard matrix $H$.

\textit{(ii)}\quad Let $C$ be a conference matrix of order $N$ having the form
\begin{equation}\label{standardC}
C=\left(\begin{array}{c|ccc}
0 & 1 & \cdots & 1 \\
\hline
1 & \\
\vdots & & K_C \\
1 & \\
\end{array}\right)\,.
\qquad
\end{equation}
The $(N-1)\times(N-1)$ matrix $K_C$ is called the core of the conference matrix $C$.
\end{definition}
It is easy to see that if an Hadamard matrix of order $N$ exists, then an Hadamard matrix of the form~\eqref{standardH} exists. Similarly, if there is a conference matrix of order $N$, then there is a conference matrix having the form~\eqref{standardC}.

\subsection*{Case $d=\frac{n}{4}-\frac{3}{2}$}
\emph{There exists an $\S\in\MC_n(\frac{n}{4}-\frac{3}{2})$ for any even $n$.}

An $\S\in\MC_n(\frac{n}{4}-\frac{3}{2})$ can be constructed as follows. Let $K_H$ be a core of a complex Hadamard matrix of order $\frac{n}{2}+1$, i.e., $K_H$ is of the size $\frac{n}{2}\times\frac{n}{2}$. Note that there exists a complex Hadamard matrix of any order $N$, e.g., the one given by $H_{jk}=\e^{2\pi\i(j-1)(k-1)/N}$. Then
\begin{equation}\label{S from core}
\S=\frac{1}{\sqrt{d^2+n-1}}\left(\begin{array}{cc}
(\frac{n}{4}-\frac{1}{2})I_m-J_m & K_H \\
K_H^* & -(\frac{n}{4}-\frac{1}{2})I_m+J_m
\end{array}\right)
\end{equation}
with $m=\frac{n}{2}$ satisfies $\S\in\MC_n(\frac{n}{4}-\frac{3}{2})$.

\subsection*{Case $d\in\left[\frac{n}{4}-\frac{3}{2}-\frac{1}{n-2},\frac{n}{4}-\frac{3}{2}\right)$}
\emph{If there exists a symmetric conference matrix of order $\frac{n}{2}+1$, then there exists an $\S\in\MC_n(d)$ for all $d\in\left[\frac{n}{4}-\frac{3}{2}-\frac{1}{n-2},\frac{n}{2}-1\right]$}, see Proposition~\ref{Conference core}.
\begin{proposition}\label{Conference core}
Let $K_C$ be a core of a symmetric conference matrix of order $\frac{n}{2}+1$ and $G_C$ be defined by $[G_C]_{jk}=\e^{\i\alpha [K_C]_{jk}}$, where $\alpha$ is chosen such that $d=\frac{n-6}{4}+\cos\alpha+\frac{n-2}{4}\cos^2\alpha$. Then
\begin{equation}\label{S G_C}
\S=\frac{1}{\sqrt{d^2+n-1}}\left(\begin{array}{cc}
(d+1)I_m-J_m & G_C \\
G_C^* & -(d+1)I_m+J_m
\end{array}\right)
\end{equation}
with $m=\frac{n}{2}$ satisfies $\S\in\MC_n(d)$ for all $d\in\left[\frac{n}{4}-\frac{3}{2}-\frac{1}{n-2},\frac{n}{2}-1\right]$. 
\end{proposition}
\begin{proof}
The proof is similar to the proof of Proposition~\ref{S incidence}. We shall show that $\S\S^*=I_n$. With regard to~\eqref{S G_C}, this condition is equivalent to
\begin{equation}\label{G_CJ}
G_CJ_m-J_mG_C=0 \qquad\text{and}\qquad (d+1)^2I_m+[m-2(d+1)]J_m+G_CG_C^*=(d^2+n-1)I_m\,.
\end{equation}
Since $K_C$ is a core of a symmetric conference matrix, the following two statements can be proven:
\begin{itemize}
\item[(a)] Every row and every column of $K_C$ contains $\frac{m-1}{2}$ entries $(+1)$, $\frac{m-1}{2}$ entries $(-1)$ and one entry $0$.
\item[(b)] The multiset $\left\{[K_H]_{ji}-[K_H]_{\ell i}\,|\,i=1,\ldots,m\right\}$ equals $\{\underbrace{2,\ldots,2}_{(m-1)/4},\underbrace{-2,\ldots,-2}_{(m-1)/4},1,-1,\underbrace{0,\ldots,0}_{(m-3)/2}\}$ for all $j\neq\ell$.
\end{itemize}
From (a) it follows $G_CJ-JG_C=0$. From (b) we obtain
$$
[G_CG_C^*]_{j\ell}=\frac{m-1}{4}(\e^{2\i\alpha}+\e^{-2\i\alpha})+\e^{\i\alpha}+\e^{-\i\alpha}+\frac{m-3}{2}=(m-1)\cos^2\alpha+2\cos\alpha-1
$$
for all $j\neq\ell$. Since $[G_CG_C^*]_{jj}=m$, we have $G_CG_C^*=mI_m+\left[(m-1)\cos^2\alpha+2\cos\alpha-1\right](J_m-I_m)$. Taking the assumption $d=\frac{n-6}{4}+\cos\alpha+\frac{n-2}{4}\cos^2\alpha$ into account, we obtain the second equation~\eqref{G_CJ}. Thus, $\S\S^*=I_n$. Finally, one can easily demonstrate that if $\alpha$ runs over $[0,\pi]$, then $d$ attains all values in $\left[\frac{n}{4}-\frac{3}{2}-\frac{1}{n-2},\frac{n}{2}-1\right]$.
\end{proof}

\subsection*{Case $d\in\left[0,1\right]$}
\emph{If there is an Hermitian conference matrix $C$ of order $\frac{n}{2}$ (equivalently: $\MC_{\frac{n}{2}}(0)\neq\emptyset$), then there exists an $\S\in\MC_n(d)$ for all $d\in[0,1]$.} It is easy to check that this matrix can be constructed as
\begin{equation}\label{S C}
\S=\frac{1}{\sqrt{d^2+n-1}}\left(\begin{array}{cc}
dI_m+C & C-\e^{\i\alpha}I_m \\
C-\e^{-\i\alpha}I_m & -(dI_m+C)
\end{array}\right)\,,
\end{equation}
where $m=\frac{n}{2}$ and $\alpha$ is chosen such that $d=\cos\alpha$.


\section{Real case (Symmetric orthogonal matrices)}\label{Real}

In this section we will focus on the matrices $\S\in\MC_n(d)$ with the additional property that all their entries are real, i.e., on symmetric orthogonal matrices of the type
$$
\frac{1}{\sqrt{d^2+n-1}}\left(\begin{array}{ccccc}
\pm d & \pm1 & \pm1 & \cdots & \pm1 \\
\pm1 & \pm d & \pm1 & \cdots & \pm1 \\
\pm1 & \pm1 & \pm d & \cdots & \pm1 \\
\vdots & \vdots & & \ddots & \vdots \\
\pm1 & \pm1 & \cdots & \pm1 & \pm d
\end{array}\right)\,.
$$
In what follows we will denote the real subset of $\MC_n(d)$ by $\MR_n(d)$, i.e.,
$$
\MR_n(d)=\left\{\S\in\U(n)\cap\R^{n,n}\ \left|\ \text{$\S$ is MPS}\;\text{ and }\;\frac{|\S_{jj}|}{|\S_{jk}|}=d\;\text{ and }\;\S=\S^T\right.\right\}\,.
$$
Elements of $\MR_n(0)$ and $\MR_n(1)$ represent (up to the factor $\frac{1}{\sqrt{d^2+n-1}}$) symmetric Hadamard and symmetric conference matrices, respectively, of order $n$. For this reason, matrices $\S\in\MR_n(d)$ with $d\in\left[0,\frac{n}{2}-1\right]$ can be regarded as a straightforward generalization of the concept of symmetric Hadamard/conference matrices. A special subset of them, namely matrices $\S\in\MR_n(d)$ with constant signs of the diagonal elements, have been studied in~\cite{SL82}. In this section we are interested in the case with general, mixed diagonal signs.

Let us begin with examination of matrices of small orders. If $n$ is small, it is an easy excercise to find admissible values of $d$ using the orthogonality of the matrix rows. The results are summarized in the following Observation.
\begin{observation}\label{small n}
The following hold:
\begin{itemize}
\item $\MR_2(d)$ is non-empty for all $d\in[0,\infty)$;
\item $\MR_3(d)$ is non-empty if and only if $d=\frac{1}{2}$;
\item $\MR_4(d)$ is non-empty if and only if $d=1$.
\end{itemize}
\end{observation}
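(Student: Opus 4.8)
The plan is to reduce the membership condition to a finite problem about sign patterns and then dispatch each order directly. Write $\S=(d^2+n-1)^{-1/2}M$, where by definition of $\MR_n(d)$ the matrix $M$ is real symmetric with diagonal entries $\epsilon_i d$ ($\epsilon_i\in\{-1,1\}$) and off-diagonal entries $a_{ij}\in\{-1,1\}$. Each row of $M$ then has squared Euclidean norm $d^2+(n-1)$ automatically, so $\S$ is orthogonal if and only if every two distinct rows of $M$ are orthogonal. First I would record that the inner product of rows $i$ and $j$ equals $d\,a_{ij}(\epsilon_i+\epsilon_j)+\sum_{k\neq i,j}a_{ik}a_{jk}$, and split according to whether the two diagonal signs agree: when $\epsilon_i\neq\epsilon_j$ the $d$-term vanishes, while when $\epsilon_i=\epsilon_j$ the vanishing of the inner product reads $\pm2d+\sum_{k\neq i,j}a_{ik}a_{jk}=0$, where the sum is an integer of the parity of $n-2$ lying in $[-(n-2),n-2]$. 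This last identity is the engine of the whole Observation.

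For $n=2$ I would just exhibit the matrix $\S=(d^2+1)^{-1/2}\left(\begin{smallmatrix}d&1\\1&-d\end{smallmatrix}\right)$; it is real symmetric and squares to $I^{(2)}$ for every $d\ge0$, so $\MR_2(d)\neq\emptyset$ on all of $[0,\infty)$. For $n=3$, two of the three signs $\epsilon_1,\epsilon_2,\epsilon_3$ must coincide by pigeonhole; for that pair the companion sum has the single term $a_{ik}a_{jk}\in\{-1,1\}$, so the relation $\pm2d+a_{ik}a_{jk}=0$ forces $2d=1$. Hence $d=\tfrac12$ is the only admissible value (in particular $d=0$ is excluded), and it is realized by $\S=I^{(3)}-\tfrac23J^{(3)}$, which is symmetric, satisfies $\S^2=I^{(3)}$, and has ratio $\tfrac12$. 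This settles the $n=3$ claim.

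For $n=4$ the pigeonhole again produces a pair with $\epsilon_i=\epsilon_j$, but now the companion sum runs over two terms and lies in $\{-2,0,2\}$, so $\pm2d+(\text{sum})=0$ only yields $d\in\{0,1\}$. Existence at $d=1$ is clear from any symmetric Hadamard matrix of order $4$, e.g. one half of the symmetric matrix with rows $(1,1,1,1)$, $(1,-1,1,-1)$, $(1,1,-1,-1)$, $(1,-1,-1,1)$, which has ratio $1$.

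The main obstacle is to rule out $d=0$ at $n=4$, i.e.\ to show that no symmetric conference matrix of order $4$ exists (a symmetric $M$ with zero diagonal, $\pm1$ off-diagonal, and $M^2=3I^{(4)}$). I would argue by a short direct computation: labelling the independent off-diagonal entries $a=M_{12}$, $b=M_{13}$, $c=M_{14}$, $e=M_{23}$, $f=M_{24}$, $g=M_{34}$, the vanishing of the $(1,2)$, $(1,3)$, $(1,4)$ entries of $M^2$ gives $be=-cf$, $ae=-cg$, $af=-bg$; multiplying these three relations and cancelling the squares $a^2=e^2=c^2=g^2=1$ collapses them to $bf=-bf$, which is impossible since $bf\in\{-1,1\}$. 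Thus $d=0$ cannot occur, leaving $d=1$ as the unique value and completing the proof.
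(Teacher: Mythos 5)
Your proof is correct, and it follows exactly the route the paper indicates: the paper states this Observation without proof, remarking only that it is ``an easy exercise to find admissible values of $d$ using the orthogonality of the matrix rows,'' which is precisely the reduction you carry out. Your elaboration is sound throughout --- the row-orthogonality identity $d\,a_{ij}(\epsilon_i+\epsilon_j)+\sum_{k\neq i,j}a_{ik}a_{jk}=0$, the pigeonhole step, the explicit witnesses for $d=\tfrac12$ and $d=1$, and in particular the three-equation contradiction ruling out a symmetric conference matrix of order $4$ (the one case the parity count alone does not exclude) are all verified and complete.
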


For dealing with $\S\in\MR_n(d)$ for a general $n$, let us introduce a notion of the \emph{standard form}:
\begin{definition}\label{standard}
We say that a matrix $\S\in\MR_n(d)$ is in the \emph{standard form} if
\begin{equation}\label{hat S}
\S=\frac{1}{\sqrt{d^2+n-1}}\left(\begin{array}{cccc|cccc}
+d & -1 & \cdots & -1 &  &  &  & \\
-1 & \ddots & &  & & \\
\vdots &  & \ddots & \\
-1 &  & & +d \\
\hline
 & & & & -d & +1 & \cdots & +1 \\
 & & & & +1 & \ddots & & \\
 & & & & \vdots & & \ddots & \\
 & & & & +1 & & & -d
\end{array}\right)
\begin{array}{cl}
\left.\begin{array}{c}
\vspace{7pt}
 \\
 \\
 \\
 \\
\end{array}\right\} & p
\vspace{5pt}\\
\left.\begin{array}{c}
\vspace{7pt}
 \\
 \\
 \\
 \\
\end{array}\right\} & n-p
\end{array}
\end{equation}
and $p\geq\frac{n}{2}$. For any $\S$ in the standard form, we define $Q=\sqrt{d^2+n-1}\S$, and denote the blocks of $Q$ by $Q^{(I)},Q^{(II)},Q^{(III)},Q^{(IV)}$, where $Q^{(I)}$ is the left upper one of size $p\times p$, i.e.,
$$
\S=\frac{1}{\sqrt{d^2+n-1}}\left(\begin{array}{c|c}
Q^{(I)} & Q^{(II)} \\
\hline
Q^{(III)} & Q^{(IV)}
\end{array}\right)\,.
$$

If $\S,\hat{\S}\in\MR_n(d)$, $\S\sim\hat{\S}$ and the matrix $\hat{\S}$ is in the standard form, we say that $\hat{\S}$ is a standard form of $\S$.
\end{definition}

\begin{remark}\label{StandardExists}
Evidently, for any $\S\in\MR_d(n)$ there exists its standard form $\hat{\S}\sim\S$; on the other hand, such an $\hat{\S}$ is generally not unique.
\end{remark}

\begin{lemma}\label{3rows}
Let $\S\in\MR_n(d)$ be in the standard form~\eqref{hat S} and let $p\geq3$.
\begin{itemize}
\item[(i)] If there exist $j,k\in\{2,3,\ldots,p\}$, $j\neq k$, such that $Q^{(I)}_{jk}=+1$, then $n+2d-2\equiv0\pmod4$ and $n-6d-6\geq0$.
\item[(ii)] If there exist $j,k\in\{2,3,\ldots,p\}$, $j\neq k$, such that $Q^{(I)}_{jk}=-1$, then $n-2d-2\equiv0\pmod4$.
\end{itemize}
\end{lemma}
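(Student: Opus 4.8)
The plan is to extract the three orthogonality relations among a carefully chosen triple of rows and turn them into a counting problem over the columns, where the sign normalization built into the standard form does the crucial work. I would fix the rows indexed $1,j,k$, all lying in the block $Q^{(I)}$ (this is where $p\geq 3$ is used), so that in $Q=\sqrt{d^2+n-1}\,\S$ their diagonal entries are all $+d$, while the standard form guarantees $Q_{1j}=Q_{1k}=-1$. Writing out the three vanishing inner products of rows $1,j,k$ of $Q$ and separating the diagonal contributions at the indices $\ell\in\{1,j,k\}$ from the remaining $n-3$ columns, I expect to obtain three identities of the shape $\sum_{\ell\neq 1,j,k}Q_{1\ell}Q_{j\ell}=2d+Q_{jk}$, $\sum_{\ell\neq 1,j,k}Q_{1\ell}Q_{k\ell}=2d+Q_{jk}$, and $\sum_{\ell\neq 1,j,k}Q_{j\ell}Q_{k\ell}=-1-2d\,Q_{jk}$, the asymmetry between the first two and the third coming precisely from the entries $Q_{1j}=Q_{1k}=-1$.

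Next I would classify the $n-3$ free columns by the sign pattern of the triple $(Q_{1\ell},Q_{j\ell},Q_{k\ell})\in\{\pm1\}^3$. Because the product of the three pairwise products equals $(Q_{1\ell}Q_{j\ell}Q_{k\ell})^2=1$, an even number (zero or two) of those pairwise products is negative, which partitions the columns into exactly four classes; call their cardinalities $P,A,B,C$ according to which pairwise products are positive. The three orthogonality identities together with $P+A+B+C=n-3$ become a linear system in $P,A,B,C$. Comparing the first two identities forces $B=C$, and solving the resulting system yields $B=C=\frac{n-2+2d\,Q_{jk}}{4}$ together with $A=\frac{n}{4}-\frac32-\frac{3d}{2}$ in the case $Q_{jk}=+1$ (and the analogous expressions when $Q_{jk}=-1$).

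Finally, since $A,B,C$ are genuine column counts, they are non-negative integers, and the conclusions of the lemma read off at once. Integrality of $B=C$ forces $n-2+2d\,Q_{jk}\equiv 0\pmod 4$, which is $n+2d-2\equiv 0\pmod 4$ when $Q_{jk}=+1$ (part (i)) and $n-2d-2\equiv 0\pmod 4$ when $Q_{jk}=-1$ (part (ii)); the non-negativity $A\geq 0$ in the case $Q_{jk}=+1$ is exactly $n-6d-6\geq 0$. I expect the only delicate points to be the correct bookkeeping of the diagonal $+d$ terms in the inner products and the verification that the four-class partition is exhaustive; once the class counts are pinned down, both the modular conditions and the inequality are immediate, so no genuinely hard step remains.
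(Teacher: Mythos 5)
Your proposal is correct and follows essentially the same route as the paper: the paper also restricts attention to the three rows $1,j,k$ (after permuting them to positions $1,2,3$ and normalizing column signs so that row $1$ becomes constant), partitions the remaining $n-3$ columns into four classes by sign pattern, and solves the resulting linear system for the class counts $\ell_1,\ldots,\ell_4$, reading off the congruences from integrality and the inequality from non-negativity. Your classification by pairwise products $(Q_{1\ell}Q_{j\ell},\,Q_{1\ell}Q_{k\ell},\,Q_{j\ell}Q_{k\ell})$ is just a normalization-free restatement of that same four-class partition, and your solved counts $\frac{n+2dQ_{jk}-2}{4}$ and $\frac{n-6d-6}{4}$ coincide with the paper's $\ell_4$ and $\ell_1$.
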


\begin{proof}
The first $p$ rows of $Q$ form the matrix
$$
\left(\begin{array}{c|c}
Q^{(I)} & Q^{(II)}
\end{array}\right)
=\left(\begin{array}{cccc|cccc}
+d & -1 & \cdots & -1 & \pm1 & \pm1 & \cdots & \pm1 \\
-1 & \ddots & &  & & \\
\vdots &  & \ddots & \\
-1 &  & & +d \\
\end{array}\right)\,;
$$
since the rows of $(Q^{(I)}|Q^{(II)})$ are multiples of the rows of $\S$, they are mutually orthogonal.

For all $j\in\{p+1,\ldots,n\}$, let us multiply the $j$-th column of $(Q^{(I)}|Q^{(II)})$ by $-Q_{1j}$, which turns all the entries on the first row of $Q^{(II)}$ into $-1$.

\textit{(i)}\quad Let $Q^{(I)}_{jk}=+1$ for certain $j,k\in\{2,3,\ldots,p\}$, $j\neq k$. In this case we apply the following two transpositions simultaneously to rows and columns of $(Q^{(I)}|Q^{(II)})$: $2\leftrightarrow j$, $3\leftrightarrow k$. Note that this operation does not affect the orthogonality of the rows. As a result, the first three rows are
\begin{equation}
\begin{array}{ccccccc}
d & -1 & -1 & -1 \cdots -1 & -1 \cdots -1 & -1 \cdots -1 & -1 \cdots -1 \\
-1 & d & (+1) & +1 \cdots +1 & +1 \cdots +1 & -1 \cdots -1 & -1 \cdots -1 \\
-1 & (+1) & d & \underbrace{+1 \cdots +1}_{\ell_1} & \underbrace{-1 \cdots -1}_{\ell_2} & \underbrace{+1 \cdots +1}_{\ell_3} & \underbrace{-1 \cdots -1}_{\ell_4}
\end{array}
\end{equation}
They are orthogonal vectors from $\R^{1,n}$, hence these four equations must be fulfilled:
\begin{align}
3+\ell_1+\ell_2+\ell_3+\ell_4=n & \quad \text{(the vectors have $n$ components)} \label{sum+}\\
-2d-1-\ell_1-\ell_2+\ell_3+\ell_4=0 & \quad \text{(row 1 and row 2 are orthogonal)} \label{OG12+} \\
-2d-1-\ell_1+\ell_2-\ell_3+\ell_4=0 & \quad \text{(row 1 and row 3 are orthogonal)} \label{OG13+} \\
1+2d+\ell_1-\ell_2-\ell_3+\ell_4=0 & \quad \text{(row 2 and row 3 are orthogonal)} \label{OG23+}
\end{align}
We sum up all the four equations to obtain $2-2d+4\ell_4=n$, and from \eqref{sum+}$+$\eqref{OG23+}$-$\eqref{OG12+}$-$\eqref{OG13+} we get $6+6d+4\ell_1=n$. Since $\ell_1\in\N_0$ and $\ell_4\in\N_0$, it holds
$$
n+2d-2\equiv0\pmod4 \qquad\text{and}\qquad n-6d-6\geq0\,.
$$

\textit{(ii)}\quad Let $Q^{(I)}_{jk}=-1$ for certain $j,k\in\{2,3,\ldots,p\}$, $j\neq k$. Similarly as in the part (i), we apply the transpositions $2\leftrightarrow j$, $3\leftrightarrow k$ simultaneously to rows and columns of $(Q^{(I)}|Q^{(II)})$ to rearrange the first three rows into the form
\begin{equation}
\begin{array}{ccccccc}
d & -1 & -1 & -1 \cdots -1 & -1 \cdots -1 & -1 \cdots -1 & -1 \cdots -1 \\
-1 & d & (-1) & +1 \cdots +1 & +1 \cdots +1 & -1 \cdots -1 & -1 \cdots -1 \\
-1 & (-1) & d & \underbrace{+1 \cdots +1}_{\ell_1} & \underbrace{-1 \cdots -1}_{\ell_2} & \underbrace{+1 \cdots +1}_{\ell_3} & \underbrace{-1 \cdots -1}_{\ell_4}
\end{array}
\end{equation}
In the same way as above, we obtain equations $6-6d+4\ell_4=n$ and $2+2d+4\ell_1=n$, hence
$$
n-2d-2\equiv0\pmod4\,.
$$
\end{proof}

\begin{theorem}\label{real:n,d}
If $\S\in\MR_n(d)$ for $d<\frac{n}{2}-1$, then
\begin{itemize}
\item $n$ is even and $n\geq6$,
\item $d\in\N_0$,
\item $\frac{n}{2}+d$ is odd.
\end{itemize}
\end{theorem}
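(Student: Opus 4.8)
The plan is to read all three conclusions out of the orthogonality of pairs of rows of $\S$, supplemented by the congruences of Lemma~\ref{3rows}. I would work with a standard form $\hat\S\sim\S$ (Remark~\ref{StandardExists}) and set $Q=\sqrt{d^2+n-1}\,\hat\S$, so that distinct rows of $Q$ are orthogonal and every off-diagonal entry is $\pm1$. Two elementary identities do almost all the work. If rows $i\neq j$ have diagonal entries of the \emph{same} sign, the two diagonal terms combine and orthogonality reads $2d=\bigl|\sum_{k\neq i,j}Q_{ik}Q_{jk}\bigr|$, a sum of $n-2$ terms $\pm1$; in particular $2d\in\Z$ and $2d\equiv n\ (\mod2)$. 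If instead rows $i\neq j$ have \emph{opposite} diagonal sign, the diagonal terms cancel and orthogonality collapses to $\sum_{k\neq i,j}Q_{ik}Q_{jk}=0$.

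First I would prove that $n$ is even, arguing by contradiction from $n$ odd. If the diagonal signs are not all equal, choosing two rows of opposite sign gives $\sum_{k\neq i,j}Q_{ik}Q_{jk}=0$; but this is a sum of $n-2$ terms $\pm1$ with $n-2$ odd, hence an odd integer, which cannot vanish. So the diagonal signs must all coincide, and after replacing $\S$ by $-\S$ if necessary the standard form has $p=n$, i.e.\ $Q=Q^{(I)}$ with first row $(d,-1,\dots,-1)$. Now I apply Lemma~\ref{3rows} to the principal sub-block on indices $\{2,\dots,n\}$: were it to contain both a $+1$ and a $-1$, parts (i) and (ii) would give $n+2d-2\equiv0$ and $n-2d-2\equiv0\ (\mod4)$, and adding these forces $n$ even, a contradiction. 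Hence that sub-block is constant ($+1$ throughout or $-1$ throughout), and orthogonality of rows $1$ and $2$ then gives $d=\frac n2-1$ or $d=1-\frac n2$ respectively; both contradict $0\le d<\frac n2-1$.

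Once $n$ is known to be even, $n\ge6$ follows quickly: $n>2$ since $0\le d<\frac n2-1$, while $n=4$ is impossible because Observation~\ref{small n} gives $\MR_4(d)\neq\emptyset$ only for $d=1$, excluded by $d<1$. For $d\in\N_0$ I use the same-sign identity: two equal-sign rows exist because $p\ge\frac n2\ge3$, so $2d\equiv n\equiv0\ (\mod2)$ and therefore $d\in\Z$, whence $d\in\N_0$. Finally, since $p\ge3$ the entry $Q^{(I)}_{23}=\pm1$ exists; applying the matching case of Lemma~\ref{3rows} yields $n+2d-2\equiv0$ or $n-2d-2\equiv0\ (\mod4)$, and dividing the relevant relation by $2$ (legitimate since $n$ and $2d$ are both even) gives $\frac n2+d\equiv1\ (\mod2)$, i.e.\ $\frac n2+d$ is odd.

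The genuine obstacle is the constant-diagonal case inside the evenness argument: there the quick opposite-sign parity trick is unavailable, and one is forced to use Lemma~\ref{3rows} to rule out a \emph{mixed} off-diagonal sign pattern and then to dispose of the two pure patterns by a direct row computation. Two points deserve care. First, Lemma~\ref{3rows} must remain valid when $p=n$; this holds because its proof only invokes the first three rows and their mutual orthogonality, which persist once the $Q^{(II)}$ block is empty. Second, the argument silently uses that a standard form can be chosen whose first row of $Q^{(I)}$ is entirely $-1$ — achievable by the sign changes permitted in Definition~\ref{equiv} — so that rows $1$ and $2$ have the explicit form used above. With these checked, all three conclusions follow.
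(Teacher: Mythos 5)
Your proof is correct, and although it rests on the same two pillars as the paper's---the standard form and the mod-$4$ congruences of Lemma~\ref{3rows}---it is organized along a genuinely different route. The paper runs a trichotomy on the off-diagonal sign pattern of $Q^{(I)}$ over rows $\{2,\dots,p\}$ (all $+1$, mixed, all $-1$) and extracts evenness, integrality and the parity of $\frac n2+d$ separately inside each case; the opposite-diagonal-sign cancellation you exploit appears there only inside the ``negative'' case with $p<n$. You instead prove the conclusions one at a time: evenness first, via a dichotomy on the \emph{diagonal} signs (mixed diagonal signs give the sum-of-$(n-2)$-signs-equals-zero parity contradiction for odd $n$; constant diagonal signs force $p=n$, where Lemma~\ref{3rows} eliminates mixed off-diagonal patterns and a direct two-row computation eliminates the constant ones), after which integrality and the parity statement each follow from a single, case-free argument: the same-sign identity $2d=\bigl|\sum_{k\neq i,j}Q_{ik}Q_{jk}\bigr|$ gives $2d\equiv n\pmod 2$, and one application of Lemma~\ref{3rows} to the entry $Q^{(I)}_{23}$ gives $\frac n2+d\equiv 1\pmod 2$. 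Your route buys two things. First, $d\in\mathbb{N}_0$ comes from a transparent parity identity rather than from combining congruences case by case. Second, and more substantively, you never need the paper's ``positive case'' estimate: the paper's one-line bound there (from rows $1$ and $2$, $2d\le n-2p+2\le 2$) by itself only excludes $d>1$ rather than all $d<\frac n2-1$, whereas in your decomposition the all-$+1$ pattern is confronted only when $p=n$, where the same two-row computation is genuinely conclusive since no free $\pm1$ terms remain. You also correctly isolate and verify the two delicate points---that Lemma~\ref{3rows} survives the degenerate case $p=n$ (its proof uses only the first three rows), and that the normalization of the first row is available within the equivalence of Definition~\ref{equiv}. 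One cosmetic slip: in the constant sub-block case your ``respectively'' is swapped---the all-$+1$ pattern gives $d=1-\frac n2$ and the all-$-1$ pattern gives $d=\frac n2-1$---but both values are excluded by $0\le d<\frac n2-1$, so the conclusion is unaffected.
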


\begin{proof}
Let $\S\in\MR_n(d)$ for $d<\frac{n}{2}-1$. Then, with regard to Observation~\ref{small n}, we have $n\geq5$. We may assume without loss of generality (cf. Rem.~\ref{StandardExists}) that $\S$ is in the standard form. Therefore $p\geq\frac{n}{2}$ (see Def.~\ref{standard}), hence $p\geq 3$, which allows us to use Lemma~\ref{3rows}. Let $Q$, $Q^{(I)},Q^{(II)}$ have the meaning introduced in Definition~\ref{standard}. We divide the explanation into three alternatives:

\begin{itemize}
\item (The ``positive'' case.) Let us assume at first that $Q^{(I)}_{jk}=+1$ for all $j,k\in\{2,3,\ldots,p\}$, $j\neq k$. The orthogonality of the first two rows of $\S$ gives the condition
$$
-2d-(p-2)+\sum_{j=p+1}^{n}Q_{1j}Q_{2j}=0\,.
$$
However, since $\sum_{j=p+1}^{n}Q_{1j}Q_{2j}\leq n-p$ and at the same time it is assumed $p\geq\frac{n}{2}$, the condition cannot be satisfied for any $d<\frac{n}{2}-1$.
Consequently, the ``positive'' case is not possible.

\item (The ``mixed'' case.) Let there exist $j,k,j',k'\in\{2,3,\ldots,p\}$, $j\neq k$, $j'\neq k'$ such that $Q^{(I)}_{jk}=+1$ and $Q^{(I)}_{j'k'}=-1$. Then both statements (i) and (ii) of Lemma~\ref{3rows} apply, whence we get
$$
n+2d-2\equiv0\pmod4 \qquad\text{and}\qquad n-2d-2\equiv0\pmod4\,.
$$
The first condition, being equivalent to $\frac{n}{2}+d\equiv1\pmod2$, means that $\frac{n}{2}+d$ is odd. Moreover, together with the second condition, it implies $2n-4\equiv0\pmod4$ and $4d\equiv0\pmod4$, hence $n$ is even and $d$ is integer.

\item (The ``negative'' case.) Let finally $Q^{(I)}_{jk}=-1$ for all $j,k\in\{2,3,\ldots,p\}$, $j\neq k$. Here we distinguish two situations:
\begin{itemize}
\item If $p=n$, we have $\S=\frac{1}{\sqrt{d^2+n-1}}\left((d+1)I_n-J_n\right)$. In this case, the orthogonality of the first two rows requires $-2d+n-2=0$, hence $d=\frac{n}{2}-1$, which contradicts our assumption $d<\frac{n}{2}-1$.
\item If $p<n$, then the orthogonality of the $1$st row and the $(p+1)$-st row of $Q$ leads to the condition
$$
d\cdot Q_{p+1,1}-\sum_{j=2}^{p}Q_{p+1,j}+Q_{1,p+1}\cdot(-d)+\sum_{j=p+1}^{n}Q_{1,j}=0\,.
$$
Since $Q_{p+1,1}=Q_{1,p+1}$, the terms with $d$ cancel. The remaining condition is of the type $\underbrace{1+1+\cdots+1}_{\ell}\underbrace{-1-1\cdots-1}_{n-2-\ell}=0$ for a certain $\ell\in\N_0$, and thus can be satisfied only when $n$ is even. Using this fact together with the relation $n-2d-2\equiv0\pmod4$ from Lemma~\ref{3rows}~(ii), we obtain $d\in\N_0$ and $\frac{n}{2}-d\equiv1\pmod2$, which is equivalent to $\frac{n}{2}+d\equiv1\pmod2$.
\end{itemize}
\end{itemize}

Finally, the inequality $n\geq6$ follows from $n\geq5$ and from the even parity of $n$, which has been proved above.
\end{proof}

\begin{remark}
It follows from Theorem~\ref{real:n,d}:
\begin{itemize}
\item If $n$ is odd, necessarily $d=\frac{n}{2}-1$.
\item If $n\equiv2\pmod4$, then $d\in\left\{0,2,4,\ldots,\frac{n}{2}-1\right\}$.
\item If $n\equiv0\pmod4$, then $d\in\left\{1,3,5,\ldots,\frac{n}{2}-1\right\}$.
\end{itemize}
\end{remark}

It turns out that for certain values of $d$, a more detailed description of $\S$ can be found. Let us start with the following observation.
\begin{observation}\label{e.v. J}
The matrix $J_k$ has a simple eigenvalue $k$ corresponding to the eigenvector $\vec{w}:=(1,1,\ldots,1)^T$, and the eigenvalue $0$ of multiplicity $k-1$ corresponding to the eigenspace $\vec{w}^\bot$.
\end{observation}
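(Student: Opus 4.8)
The plan is to verify the two eigenvalue claims directly from the structure of $J^{(k)}$, exploiting that every entry equals $1$. First I would compute the action on $\vec{w}$: since each row of $J^{(k)}$ sums to $k$, one immediately gets $J^{(k)}\vec{w}=k\vec{w}$, so $k$ is an eigenvalue with eigenvector $\vec{w}$.

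Next I would treat the eigenvalue $0$. For an arbitrary column vector $\vec{v}=(v_1,\ldots,v_k)^T$, every component of the product $J^{(k)}\vec{v}$ equals $\sum_{i=1}^{k} v_i=\langle\vec{w},\vec{v}\rangle$. Hence whenever $\vec{v}\in\vec{w}^\bot$ we have $J^{(k)}\vec{v}=0$, which shows that the entire $(k-1)$-dimensional subspace $\vec{w}^\bot$ is contained in the eigenspace associated with $0$.

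Finally, a dimension count closes the argument. Because $\langle\vec{w},\vec{w}\rangle=k>0$, the vector $\vec{w}$ does not lie in $\vec{w}^\bot$, so $\R^k=\mathrm{span}\{\vec{w}\}\oplus\vec{w}^\bot$ is a direct sum of the eigenspaces for $k$ and $0$ found above; these therefore exhaust $\R^k$. Consequently the eigenspace of $0$ has dimension exactly $k-1$ and that of $k$ dimension exactly $1$, establishing both the multiplicity $k-1$ of $0$ and the simplicity of $k$. There is no genuine obstacle here — the statement is elementary — and the only point worth checking explicitly is precisely that $\vec{w}\notin\vec{w}^\bot$, which is what guarantees that the two eigenspaces intersect trivially and that their dimensions sum to $k$.
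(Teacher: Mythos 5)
Your proof is correct, and it differs slightly in mechanism from the paper's. The paper also begins with the row-sum computation $J^{(k)}\vec{w}=k\vec{w}$, but for the eigenvalue $0$ it simply invokes rank--nullity: since $\rank\bigl(J^{(k)}\bigr)=1$, the kernel has dimension $k-\rank\bigl(J^{(k)}\bigr)=k-1$. You instead compute $J^{(k)}\vec{v}=\langle\vec{w},\vec{v}\rangle\,\vec{w}$ for arbitrary $\vec{v}$, conclude $\vec{w}^\bot\subseteq\Ker\bigl(J^{(k)}\bigr)$, and finish with the orthogonal decomposition $\R^k=\mathrm{span}\{\vec{w}\}\oplus\vec{w}^\bot$. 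The trade-off: the paper's rank argument is shorter, but it only counts the multiplicity of $0$ and leaves implicit that the $0$-eigenspace is exactly $\vec{w}^\bot$ and that $k$ is simple; your version verifies both of those claims of the statement explicitly, since the direct sum exhibits $\R^k$ as the sum of the two eigenspaces. Both arguments are elementary and complete.
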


Observation~\ref{e.v. J} will help us to characterize $\S\in\MR_n(d)$ for $d$ exceeding $\frac{n}{6}-1$:
\begin{proposition}\label{d>n/6-1}
\textit{(i)}\quad Let $\S\in\MR_n(d)$ for $d\in\left(\frac{n}{6}-1,\frac{n}{2}-1\right)$. Then $p=\frac{n}{2}$ and there is a $G\in\{-1,1\}^{p,p}$ such that
\begin{equation}\label{great d}
\S\sim\frac{1}{\sqrt{d^2+n-1}}\left(\begin{array}{cc}
(d+1)I_p-J_p & G \\
G^T & -(d+1)I_p+J_p
\end{array}\right)\,.
\end{equation}
The matrix $G$ has these properties: $G$ is normal, $G$ commutes with $J_p$, and $GG^T=(n-2d-2)I_p+(2d+2-n/2)J_p$.\\
\textit{(ii)}\quad On the other hand, if $d$ and $G$ fulfil the conditions above, then any $\S$ satisfying~\eqref{great d} belongs to $\MR_n(d)$.
\end{proposition}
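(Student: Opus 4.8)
The plan is to fix a standard form of $\S$ (one exists by Remark~\ref{StandardExists}), so that $p\geq\frac n2$ and the blocks $Q^{(I)},\dots,Q^{(IV)}$ of $Q=\sqrt{d^2+n-1}\,\S$ are available; since both the conclusion $p=\frac n2$ and the relation \eqref{great d} are equivalence-invariant, no generality is lost. The first move is to determine $Q^{(I)}$. Because $n$ is even and $n\geq6$ by Theorem~\ref{real:n,d}, we have $p\geq\frac n2\geq3$, so Lemma~\ref{3rows}~(i) applies: if some interior entry $Q^{(I)}_{jk}$ with $j,k\in\{2,\ldots,p\}$ were $+1$, then $n-6d-6\geq0$, i.e. $d\leq\frac n6-1$, contradicting $d>\frac n6-1$. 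Hence every such entry is $-1$; together with the first row and column (which are $-1$ in the standard form) this forces $Q^{(I)}=(d+1)I^{(p)}-J^{(p)}$.

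Next I would extract $p=\frac n2$ from orthogonality. Write $B:=Q^{(II)}\in\{-1,1\}^{p,n-p}$ and use that $\S$ symmetric and orthogonal gives $Q^2=(d^2+n-1)I^{(n)}$ with $Q^{(III)}=B^T$. Its $(1,1)$ block reads $(Q^{(I)})^2+BB^T=(d^2+n-1)I^{(p)}$, and substituting $Q^{(I)}=(d+1)I^{(p)}-J^{(p)}$ yields
$$
BB^T=(n-2d-2)I^{(p)}+(2d+2-p)J^{(p)}.
$$
By Observation~\ref{e.v. J} the matrix $J^{(p)}$ vanishes on the $(p-1)$-dimensional space $\vec w^\bot$, so $BB^T$ acts there as the scalar $n-2d-2$, which is strictly positive since $d<\frac n2-1$. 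Thus $\rank(BB^T)\geq p-1$, while $\rank(BB^T)=\rank(B)\leq n-p$. Hence $p-1\leq n-p$, i.e. $p\leq\frac{n+1}2$, and with $p\geq\frac n2$ and integrality this gives $p=\frac n2$. This rank estimate is the crux of the argument.

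With $p=\frac n2$ the matrix $B=:G$ is square and the displayed identity becomes the claimed $GG^T=(n-2d-2)I^{(p)}+(2d+2-\frac n2)J^{(p)}$. To pin down $Q^{(IV)}$ I would rerun the first step on $-\S$: its standard form places $-Q^{(IV)}$ in the top-left position (with $+d$ on the diagonal and $-1$ in the first row and column), so the same Lemma~\ref{3rows}~(i) reasoning gives $-Q^{(IV)}=(d+1)I^{(p)}-J^{(p)}$, i.e. $Q^{(IV)}=-(d+1)I^{(p)}+J^{(p)}=-Q^{(I)}$. The $(1,2)$ block of $Q^2$, which must vanish, is $Q^{(I)}G+GQ^{(IV)}=Q^{(I)}G-GQ^{(I)}$, so $G$ commutes with $Q^{(I)}=(d+1)I^{(p)}-J^{(p)}$, equivalently with $J^{(p)}$. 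Finally the $(2,2)$ block gives $G^TG+(Q^{(IV)})^2=(d^2+n-1)I^{(p)}$; since $(Q^{(IV)})^2=(Q^{(I)})^2$, this yields the same expression for $G^TG$ as for $GG^T$, so $G$ is normal. This completes part (i).

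For part (ii) the computation runs in reverse and is routine. Given $G\in\{-1,1\}^{p,p}$ with $p=\frac n2$, $G$ normal, $GJ^{(p)}=J^{(p)}G$ and the stated formula for $GG^T$, the matrix $\S$ of \eqref{great d} is visibly real, symmetric and MPS with ratio $d$, so only $\S\S^T=I^{(n)}$ remains; setting $Q^{(I)}=(d+1)I^{(p)}-J^{(p)}$ and $Q^{(IV)}=-Q^{(I)}$, the $(1,1)$ block of $Q^2$ is $(Q^{(I)})^2+GG^T$, which collapses to $(d^2+n-1)I^{(p)}$ exactly because of the formula for $GG^T$; the $(2,2)$ block $G^TG+(Q^{(IV)})^2$ equals the same thing once normality replaces $G^TG$ by $GG^T$; and the $(1,2)$ block $Q^{(I)}G-GQ^{(I)}$ vanishes precisely because $G$ commutes with $J^{(p)}$. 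Hence $\S\in\MR_n(d)$. I expect the only genuine obstacles to be the two structural reductions in part (i), namely the rank bound forcing $p=\frac n2$ and the passage to $-\S$ to fix $Q^{(IV)}$, with everything else amounting to bookkeeping with $I^{(p)}$ and $J^{(p)}$.
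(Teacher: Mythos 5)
Your proposal is correct, and it shares the paper's overall skeleton: Lemma~\ref{3rows}~(i) forces $Q^{(I)}=(d+1)I^{(p)}-J^{(p)}$, the passage to $-P\S P^{-1}$ with the swap permutation identifies $Q^{(IV)}=-Q^{(I)}$ (this is literally the paper's Step~4), and the block identities of $Q^2=(d^2+n-1)I^{(n)}$ deliver normality of $G$, commutation with $J^{(p)}$, the $GG^T$ formula, and part~(ii). Where you genuinely diverge is at the crux you yourself flagged, the proof that $p=\frac{n}{2}$. The paper argues spectrally: it shows $m=\rank(\S+I^{(n)})\geq p$ by proving that $L=\bigl(d+1+\sqrt{d^2+n-1}\bigr)I^{(p)}-J^{(p)}$ is nonsingular -- which needs a small number-theoretic exclusion of the cases $\ell:=\sqrt{d^2+n-1}-d\in\{1,2\}$, using integrality of $d$ and evenness of $n$ from Theorem~\ref{real:n,d} -- and then combines $m\geq p\geq\frac{n}{2}$ with Theorem~\ref{m,p,d} (hence with the trace identity) to get $m=p=\frac{n}{2}$. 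You instead make a Gram-matrix rank count: from the $(1,1)$ block, $BB^T=(n-2d-2)I^{(p)}+(2d+2-p)J^{(p)}$ acts as the positive scalar $n-2d-2$ on the $(p-1)$-dimensional space $\vec{w}^\bot$, so $p-1\leq\rank(B)\leq n-p$, and evenness of $n$ turns $2p\leq n+1$ into $p\leq\frac{n}{2}$. This is more elementary and more self-contained: it bypasses Theorem~\ref{m,p,d}, the eigenvalue-multiplicity bookkeeping, and the $\ell\in\{1,2\}$ analysis, and it uses from Theorem~\ref{real:n,d} only that $n$ is even and $n\geq6$, not that $d$ is an integer. What the paper's route buys in exchange is the additional conclusion $m=\frac{n}{2}$, i.e.\ control of the eigenvalue multiplicity, which ties this proposition to the $m=\frac{n}{2}$ machinery of Theorem~\ref{param S} and Remark~\ref{normal T}; your route proves exactly the statement as claimed, and arguably more cleanly.
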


\begin{proof}
We assume without loss of generality that $\S$ is in the standard form. Since $n\geq6$ according to Theorem~\ref{real:n,d}, it holds $p\geq\frac{n}{2}\geq3$. Let $Q^{(I)},Q^{(II)},Q^{(III)},Q^{(IV)}$ have the meaning introduced in Definition~\ref{standard}. The proof will be carried out in five steps.

\emph{Step 1.}\quad Since $p\geq3$ and $d>\frac{n}{6}-1$, it immediately follows from Lemma~\ref{3rows}~(i) that $Q^{(I)}=(d+1)I_p-J_p$.

\emph{Step 2.}\quad We prove that $m\geq p$.

Since $Q^{(I)}=(d+1)I_p-J_p$, it holds
$$
m=\rank(\S+I)\geq\rank\left(\frac{1}{\sqrt{d^2+n-1}}Q^{(I)}+I_p\right)=\rank\left(\left(d+1+\sqrt{d^2+n-1}\right)I_p-J_p\right)\,.
$$
Our goal is to show that the matrix $L:=\left(d+1+\sqrt{d^2+n-1}\right)I_p-J_p$ is invertible (equivalently, it does not have the eigenvalue $0$).

Since the eigenvalues of $J_p$ are $0$ and $p$ according to Observation~\ref{e.v. J}, the eigenvalues of $L$ are $d+1+\sqrt{d^2+n-1}$ and $d+1+\sqrt{d^2+n-1}-p$. The former one is trivially nonzero, thus it suffices to show $d+1+\sqrt{d^2+n-1}-p\neq0$. We proceed by contradiction:

Let $d+1+\sqrt{d^2+n-1}-p=0$. Then the value $\ell:=\sqrt{d^2+n-1}-d=p-1-2d$ satisfies $\ell\in\N$ (because $p\in\N$ and $\d\in\N$, cf. Thm.~\ref{real:n,d}), and also
$$
\ell=\sqrt{d^2+n-1}-d=\frac{n-1}{\sqrt{d^2+n-1}+d}=\frac{n-1}{p-1}\leq\frac{n-1}{\frac{n}{2}-1}<3\,,
$$
because $n\geq6$. Consequently, $\ell=1$ or $\ell=2$.
\begin{itemize}
\item Case $\ell=1$ implies $\sqrt{d^2+n-1}-d=1$, hence $d=\frac{n}{2}-1$, which contradicts the assumption $\frac{n}{6}-1<d<\frac{n}{2}-1$.
\item Case $\ell=2$ implies $\sqrt{d^2+n-1}-d=2$, hence $d=\frac{n-5}{4}$. However, since $n$ is even (see Thm.~\ref{real:n,d}), such $d$ is not an integer, and thus is not admissible.
\end{itemize}
Therefore $L$ is invertible, hence $\rank(L)=p$, thus indeed $m=\rank(L)\geq p$.

\emph{Step 3.}\quad
Since $p\geq\frac{n}{2}$ and at the same time $m\geq p$ due to Step 2, Theorem~\ref{m,p,d} implies $m=p=\frac{n}{2}$.

\emph{Step 4.}\quad We show that $Q^{(IV)}=-(d+1)I_{n-p}+J_{n-p}$.

Since $p=\frac{n}{2}$ and $\S$ is in the standard form, it is obvious that the matrix $\S':=-P\S P^{-1}$ for $P=\left(\begin{array}{cc}
0 & I_p \\
I_p & 0
\end{array}\right)$, which takes the form
$\S'=\frac{1}{\sqrt{d^2+n-1}}\left(\begin{array}{c|c}
-Q^{(IV)} & -Q^{(III)} \\
\hline
-Q^{(II)} & -Q^{(I)}
\end{array}\right)$, satisfies $\S'\sim\S$ and is in the standard form as well.
Therefore, with regard to Step 1, it holds $-Q^{(IV)}=(d+1)I_p-J_p$.

\emph{Step 5.}\quad The result of Step 1 together with the hermiticity of $\S$ imply \eqref{great d}. 
From the unitarity of $\S$, $\S\S^*=I_n$, we immediately obtain the properties $GG^T=G^TG$, $GJ_p=J_pG$ and $GG^T=(n-2d-2)I_p+(2d+2-n/2)J_p$. And vice versa, if $G$ fulfils these conditions and $d\in\left(\frac{n}{6}-1,\frac{n}{2}-1\right)$, then the matrix~\eqref{great d} satisfies $\S\S^*=I$.
\end{proof}

Proposition~\ref{d>n/6-1} allows us to formulate the necessary and sufficient condition of the existence of $\S\in\MR_n(d)$ for all $d>\frac{n}{6}-1$:
\begin{theorem}\label{design}
\begin{itemize}
\item[(i)] There is no $\S\in\MR_n(d)$ for $d\in\left(\frac{n}{6}-1,\frac{n}{4}-\frac{3}{2}\right)$.
\item[(ii)] An $\S\in\MR_n(d)$ for $d\in\left[\frac{n}{4}-\frac{3}{2},\frac{n}{2}-1\right)$ exists if and only if the value
$$
q:=\sqrt{\frac{n}{2}+\left(\frac{n}{2}-1\right)\left(2d+2-\frac{n}{2}\right)}
$$
is an integer and there exists a symmetric $\left(\frac{n}{2},k,\lambda\right)$-design for $k=\frac{n}{4}-\frac{q}{2}$ and $\lambda=\frac{1}{2}(d-q+1)$.
\end{itemize}
\end{theorem}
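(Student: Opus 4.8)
The plan is to reduce the whole statement to a question about the off-diagonal block $G$ supplied by Proposition~\ref{d>n/6-1}. Both intervals occurring in the theorem lie inside $\left(\frac n6-1,\frac n2-1\right)$ (with the single exception of the left endpoint $d=0$ at $n=6$, noted below), so on this whole range the existence of $\S\in\MR_n(d)$ is \emph{equivalent}, via Proposition~\ref{d>n/6-1} and \eqref{great d}, to the existence of a matrix $G\in\{-1,1\}^{p,p}$ with $p=\frac n2$ that is normal, commutes with $J^{(p)}$, and satisfies
$$
GG^T=(n-2d-2)I^{(p)}+\left(2d+2-\tfrac n2\right)J^{(p)}\,.
$$
By Theorem~\ref{real:n,d} we may assume throughout that $d\in\N_0$ and $n$ is even, so $p\geq 3$ is an integer. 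The idea is to analyse such a $G$ directly for part (i) and to match it to a symmetric design for part (ii).

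For part (i) I would use a trace/eigenvalue count. Since $G$ commutes with $J^{(p)}$ its row sums all coincide with a single integer $s$ (a sum of $p$ entries $\pm1$), so $\vec w=(1,\ldots,1)^T$ is an eigenvector of $GG^T$ with eigenvalue $s^2$; comparing with the displayed form of $GG^T$ and using Observation~\ref{e.v. J} this eigenvalue equals precisely the quantity $q^2$ of the theorem, whence $s^2=q^2$ and $q=|s|\in\N_0$. On $\vec w^\bot$ the matrix $GG^T$ acts as $(n-2d-2)I^{(p)}$, an eigenvalue of multiplicity $p-1$. Taking the trace, $\Tr(GG^T)=\sum_{j,k}G_{jk}^2=p^2$, gives
$$
q^2=p^2-(p-1)(n-2d-2)\,.
$$
For $d<\frac n4-\frac32$ one has $n-2d-2>p+1$, hence $q^2<p^2-(p-1)(p+1)=1$; since $q^2$ is a non-negative perfect square this forces $q=0$ and $(p-1)(n-2d-2)=p^2$. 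But $n-2d-2$ is an integer while $p^2\equiv1\pmod{p-1}$, so $(p-1)\mid p^2$ is impossible for $p\geq3$. This contradiction rules out $\S\in\MR_n(d)$ on the whole open interval $\left(\frac n6-1,\frac n4-\frac32\right)$.

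For part (ii) the plan is to identify $G$ with the incidence matrix of a symmetric design. Using Definition~\ref{equiv} (multiplying the last $n-p$ rows and columns simultaneously by $-1$ sends $G\mapsto-G$), I may normalise the common row sum of $G$ to $+q$. Then $A:=\frac12\bigl(J^{(p)}-G\bigr)$ is a $\{0,1\}$-matrix with constant row and column sums $k=\frac{p-q}{2}=\frac n4-\frac q2$, and substituting $G=J^{(p)}-2A$ gives $GG^T=4AA^T+(p-4k)J^{(p)}$, which upon comparison with the prescribed $GG^T$ yields
$$
AA^T=(k-\lambda)I^{(p)}+\lambda J^{(p)}\qquad\text{with}\quad\lambda=\tfrac12(d-q+1)\,,
$$
so $A$ satisfies exactly the incidence conditions~\eqref{IncidMat} of a symmetric $\left(\frac n2,k,\lambda\right)$-design. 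Conversely, starting from such a design with $q\in\N_0$, the matrix $G:=J^{(p)}-2A$ is a $\pm1$ matrix that commutes with $J^{(p)}$ (since $AJ^{(p)}=J^{(p)}A=kJ^{(p)}$), is normal (since $AA^T=A^TA$ for symmetric designs), and reproduces the required $GG^T$; Proposition~\ref{d>n/6-1}(ii) then produces an $\S\in\MR_n(d)$. That $q$ must be an integer is forced, because $q=|s|$ as in part (i).

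The hard part will be the bookkeeping of this design correspondence rather than any single deep estimate: I must verify that $k$ and $\lambda$ emerge as the prescribed integers lying in the admissible range $v>k>\lambda\geq1$, normalise the sign of the row sum $s$ correctly through the equivalence, and transfer the properties ``normal'' and ``commutes with $J^{(p)}$'' faithfully in both directions. Some care is also needed at the left endpoint $d=\frac n4-\frac32$, where $q=1$ and $k-\lambda=\frac{p+1}{4}$ attains the extremal value for symmetric designs, and at the smallest admissible order $n=6$ (where $d=0$ sits exactly on the boundary excluded by Proposition~\ref{d>n/6-1} and the parameters degenerate to $\lambda=0$, $A=I^{(p)}$); these must be checked directly.
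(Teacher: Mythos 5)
Your proposal is correct and follows the same skeleton as the paper's proof: both reduce the question, via Proposition~\ref{d>n/6-1}, to the existence of the $\pm1$ block $G$ with $GJ^{(p)}=J^{(p)}G$ and $GG^T=(n-2d-2)I^{(p)}+(2d+2-n/2)J^{(p)}$, both identify the eigenvalue $q^2$ of $GG^T$ at the all-ones vector, and both set up the correspondence $A=\frac12\left(J^{(p)}\mp G\right)$ with incidence matrices satisfying~\eqref{IncidMat}. The differences are in the closing steps, and they are worth recording. In (i), the paper argues from positive semidefiniteness of $GG^T$ that $q^2\geq0$, giving $d\geq\frac n4-\frac32-\frac1{n-2}$, and then uses integrality of $d$ (Theorem~\ref{real:n,d}) to exclude the short residual interval; you instead note that $q$ is the modulus of an integer row sum, so $q^2<1$ forces $q=0$, which you then eliminate by the divisibility obstruction $(p-1)\mid p^2$ for $p\geq3$ --- equally valid and arguably more self-contained. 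In (ii), your observation that $GJ^{(p)}=J^{(p)}G$ already forces all row and column sums of $G$ to equal one integer $s$, so that $\vec{w}=(1,\ldots,1)^T$ is an eigenvector of $G$ itself with eigenvalue $s=\pm q$, is a genuine simplification: the paper reaches the same point through a separate Claim about simple eigenvalues of normal matrices, which your route makes unnecessary. Finally, your flagged caveats are not pedantry but expose a soft spot in the paper's own treatment: at $n=6$, $d=0$ the left endpoint $\frac n4-\frac32$ coincides with the excluded boundary $\frac n6-1$ of Proposition~\ref{d>n/6-1}, and both there and at $d=\frac n2-3$ (where $q=\frac n2-2$, $k=1$) the parameters degenerate to $\lambda=0$, so no design in the strict sense of Definition~\ref{SymDesign} (which requires $\lambda\geq1$) exists even though $\S$ does (e.g.\ the $\alpha=\pi$ instance of~\eqref{UpperInt}); these cases must be handled separately or the design condition read as ``a $\{0,1\}$-matrix solving~\eqref{IncidMat}''. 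The paper's proof passes over this silently, so your plan, carried out with those checks, would actually be tighter than the published argument.
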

\begin{proof}
\textit{(i)}\  According to Proposition~\ref{d>n/6-1}, an $\S\in\MR_n(d)$ for $d>\frac{n}{6}-1$ exists if and only if there exists a normal $G\in\{-1,1\}^{p,p}$ ($p=\frac{n}{2}$) satisfying
$$
GJ_p=J_pG\,,\qquad GG^T=(n-2d-2)I_p+(2d+2-n/2)J_p\,.
$$
With regard to Observation~\ref{e.v. J}, the matrix $GG^T$ has a simple eigenvalue $\frac{n}{2}+(\frac{n}{2}-1)(2d+2-\frac{n}{2})$ and a corresponding eigenvector $\vec{w}=(1,1,\ldots,1)^T$. Since $GG^T$ is a nonnegative matrix, necessarily $\frac{n}{2}+(\frac{n}{2}-1)(2d+2-\frac{n}{2})\geq0$, hence we obtain the condition
$$
d\geq\frac{n}{4}-\frac{3}{2}-\frac{1}{n-2}\,.
$$
Finally, we know from Theorem~\ref{real:n,d} that $d$ is integer, $n$ is even and $n\geq6$, for this reason the last condition can be equivalently written as $d\geq\frac{n}{4}-\frac{3}{2}$. Hence (i) is proved.

In the proof of (ii), we start from the following obvious statement:\\
\emph{If $M$ is normal, then $\vec{v}$ is an eigenvector of $M$ with a simple eigenvalue $\sigma$ if and only if $\vec{v}$ is an eigenvector of $MM^*$ with a simple eigenvalue $|\sigma|^2$.} \\
We have found in part (i) that $\vec{w}=(1,1,\ldots,1)^T$ is an eigenvector of $GG^T$ corresponding to a simple eigenvalue $\frac{n}{2}+(\frac{n}{2}-1)(2d+2-\frac{n}{2})\geq0$. Therefore, due to the above statement, it is at the same time an eigenvector of $G$ corresponding to a simple eigenvalue $\mu$ of modulus $q:=\sqrt{\frac{n}{2}+\left(\frac{n}{2}-1\right)\left(2d+2-\frac{n}{2}\right)}$. Since both $\vec{w}$ and $G$ are real and their entries are integers $\pm1$, the eigenvalue $\mu$ must be real and integer, hence $q\in\N_0$ and $\mu=\pm q$.
The actual sign of $\mu$ is irrelevant, because we can always turn $G$ in \eqref{great d} into $-G$ by multiplying the rows $\frac{n}{2}+1,\ldots,n$ of $\S$ and the columns $\frac{n}{2}+1,\ldots,n$ of $\S$ by $-1$. Let us assume for definiteness $\mu=-q$. We define $A=\frac{1}{2}\left(G+J_p\right)$; then $A\in\{0,1\}^{p,p}$ and
$$
AJ_p=\underbrace{\frac{1}{2}\left(-q+\frac{n}{2}\right)}_{k}J_p\,,\qquad AA^T=\underbrace{\left(\frac{n}{4}-\frac{d}{2}-\frac{1}{2}\right)}_{k-\lambda}I_p+\underbrace{\frac{d-q+1}{2}}_{\lambda}J_p\,.
$$
It follows (see~\eqref{IncidMat}) that $A$ is the incidence matrix of a symmetric $\left(\frac{n}{2},k,\lambda\right)$-design.
\end{proof}

\begin{remark}
After finishing this work we found out about paper~\cite{La83} in which orthogonal (not necessarily symmetric) MPS matrices with constant integral diagonal have been studied. It follows from there that our Theorem~\ref{real:n,d} is valid even when the symmetry of the matrix is weakened to the non-skew-symmetry.
\end{remark}

Let us finish the section by a series of remarks on construction of matrices $\S\in\MR_n(d)$.

\subsection*{Notes on constructions of $\S\in\MR_n(d)$}

\begin{itemize}

\item \emph{An $\S\in\MR_n(\frac{n}{2}-1)$ exists for all $n\geq2$.} For example $\S=I_n-\frac{2}{n}J_n$, see also Section~\ref{Construction}.

\item \emph{An $\S\in\MR_n(\frac{n}{2}-3)$ exists for any even $n$.} To obtain $\S$, set $\alpha=\pi$ in~\eqref{UpperInt}.

\item \emph{An $\S\in\MR_n(d)$ for $d\in\left[\frac{n}{4}-\frac{3}{2},\frac{n}{2}-3\right)$ exists if and only if $q:=\sqrt{\frac{n}{2}+\left(\frac{n}{2}-1\right)\left(2d+2-\frac{n}{2}\right)}\in\N_0$ and there exists a symmetric $\left(\frac{n}{2},k,\lambda\right)$-design for $k=\frac{n}{4}-\frac{q}{2}$ and $\lambda=\frac{1}{2}(d-q+1)$.} (Cf. Thm.~\ref{design}.)

The construction follows from Theorem~\ref{design}. If $A$ is the incidence matrix of a symmetric $\left(\frac{n}{2},\frac{n}{4}-\frac{q}{2},\frac{d}{2}-\frac{q}{2}+\frac{1}{2}\right)$-design, we set $G=2A-J_{\frac{n}{2}}$ and construct $\S$ according to~\eqref{great d}.

\item In particular, for $d=\frac{n}{4}-\frac{3}{2}$ we obtain:
\begin{proposition}\label{Hadam.}
There exists an $\S\in\MR_n(\frac{n}{4}-\frac{3}{2})$ if and only if there exists a real Hadamard matrix of order $\frac{n}{2}+1$.
\end{proposition}
\begin{proof}
We apply Theorem~\ref{design}. If $d=\frac{n}{4}-\frac{3}{2}$, then $q=1\in\N_0$ and $k=\frac{n}{4}-\frac{1}{2}=\frac{1}{2}\left(\frac{n}{2}+1\right)-1$, $\lambda=\frac{n}{8}-\frac{3}{4}=\frac{1}{4}\left(\frac{n}{2}+1\right)-1$. Therefore the existence of an $\S\in\MR_n(\frac{n}{4}-\frac{3}{2})$ is equivalent to the existence of an 
$(N-1,\frac{1}{2}N-1,\frac{1}{4}N-1)$-design where $N=\frac{n}{2}+1$, which is further equivalent to the existence of an Hadamard matrix of order $N$ (see~\cite{BJL}, Lemma I.9.3).
\end{proof}
A direct proof of Proposition~\ref{Hadam.} is the following:\\
The validity of implication $\Leftarrow$ is confirmed by the construction \eqref{S from core}, therefore it suffices to prove $\Rightarrow$. Let $\S\in\MR_n(\frac{n}{4}-\frac{3}{2})$. Due to Proposition~\ref{d>n/6-1}, $\S$ satisfies \eqref{great d} for a certain $G\in\{-1,1\}^{\frac{n}{2},\frac{n}{2}}$. In the same way as in the proof of Proposition~\ref{d>n/6-1}, we can show that $\vec{w}=(1,1,\ldots,1)^T$ is an eigenvector of $G$ corresponding to a simple eigenvalue $\mu=\pm1$. 
We define
$$
H:=\left(\begin{array}{c|ccc}
-\mu & 1 & \cdots & 1 \\
\hline
1 & \\
\vdots & & G \\
1 & 
\end{array}\right)\,,
$$
and using the properties of $G$ derived in Proposition~\ref{d>n/6-1}, we show that
$HH^*=(\frac{n}{2}+1)I$, thus $H$ is an Hadamard matrix.

\emph{Remark.}\ 
It follows from Theorem~\ref{real:n,d} that an $\S\in\MR_n(\frac{n}{4}-\frac{3}{2})$ exists \emph{only if} $n\equiv6\pmod8$. Proposition~\ref{Hadam.} implies that \emph{if the Hadamard conjecture is true, then an $\S\in\MR_n(\frac{n}{4}-\frac{3}{2})$ exists if and only if $n\equiv6\pmod8$}.

\item \emph{If $d\in\left(\frac{n}{6}-1,\frac{n}{4}-\frac{3}{2}\right)$, then $\MR_n(d)$ is empty.} (Thm.~\ref{design}.)

\item \emph{If there is a symmetric conference matrix $C$ of order $\frac{n}{2}$, then there exists an $\S\in\MR_n(1)$.} To obtain $\S$, set $\alpha=0$ in~\eqref{S C}.

\item The existence of $\S\in\MR_n(0)$ is trivially equivalent to the existence of a symmetric conference matrix of order $n$.

\item For constructions of $\S\in\MR_n(d)$ with $p=n$, we refer to~\cite{SL82}, where real symmetric MPS matrices with constant diagonal were studied and certain methods of their construction have been proposed.

\end{itemize}


\section{An application: Quantum graphs}\label{QuantumGraphs}

Let us briefly explain in what context matrices from the sets $\MC_n(d)$ emerge in quantum mechanics on graphs.

Consider a metric graph, i.e., a set of vertices and a set of edges, the edges connect the vertices, each edge has a given length. Let us suppose that the graph is of microscopic size and that there is a particle, for example an electron, having certain energy and moving along the graph edges. 
As the size of the system is very small, the behaviour 
of the particle is governed by the laws of quantum mechanics. In particular, its position cannot be exactly determined, one can only find the probability density of its occurrence in a given point $x$ of the graph, which is given as $|\Psi(x)|^2$, where $\Psi$ is the \emph{wave function} of the particle. The function $\Psi$ depends on the topology of the graph, on the lengths of the edges, on the particle energy $E$ and on \emph{physical characteristics of the vertices (junctions)}. The physical characteristics of each junction are expressed by the 
\emph{scattering matrix} $\S$ that has the following properties~\cite{Ne66}:
\begin{itemize}
\item $\S$ is a complex $n\times n$ matrix, where $n$ is the vertex degree.
\item Let the edges coupled at the junction be numbered by $1,\ldots,n$. If the quantum particle comes in the junction from the $j$-th line, then it is scattered into all lines $1,\ldots,n$ (including the $j$-th line itself) with the probabilities $|\S_{1j}|^2,\ldots,|\S_{nj}|^2$. In other words, the squared moduli of the entries of $\S$ correspond to the scattering probabilities at the junction.
\item $\S$ is always unitary (this property may be viewed as the quantum version of Kirchhoff's law, see~\cite{KS99,KS00}).
\item $\S$ generally depends on energy (where $\S(E)$ can be uniquely calculated from $\S(1)$).
\item $\S$ is energy-independent if and only if $\S$ is Hermitian.
\end{itemize}
Now it is obvious what role the Hermitian unitary MPS matrices play in this theory. Consider a junction of degree $n$. If its physical characteristics are described by a Hermitian unitary MPS matrix $\S\in\MC_n(d)$, then the particle is transmitted from any edge to any other edge with equal probabilities, also the reflection probabilities are the same at all edges, and furthermore, the probabilities are independent of the particle energy. The parameter $d$ squared represents the ratio of the reflection probability to the probability that the particle is transmitted to any chosen edge different from the incoming one.

The existence of an $\S\in\MC_n(d)$ thus determines whether or not it is possible to physically construct (manufacture) an equally-transmitting junction with the given scattering ratio $d^2$.
It is also noteworthy that real scattering matrices $\S$, examined in Section~\ref{Real}, correspond to junctions with the additional physical property of time reversibility.

Let us add that quantum graphs serve as efficient models of realistic physical systems where a quantum particle moves along thin paths.
The concept has been originally introduced as a tool for calculating the spectra of aromatic hydrocarbons~\cite{RS53}. Nowadays it is widely used for the study of systems where an electron is confined to thin nano-sized wires or networks made for example of semiconductors. It also allows to examine physical properties of carbon nano-structures, including graphene and single-wall nano-tubes~\cite{KP07}. With regard to the current rapid progress in nanotechnologies, quantum mechanics on graphs attracts recently a lot of attention and the literature is very extensive, let us refer for instance to the proceedings~\cite{EKST08}.

The particular problem of equal transmission probabilities was considered for the first time in the paper~\cite{HSW07}. Its authors studied quantum junctions described by unitary (but not necessarily Hermitian) $n\times n$ scattering matrices with the property $\S(1)_{jj}=0$, $|\S(1)_{j\ell}|=1/\sqrt{n-1}$. The motivation originated mainly in the analysis of the so-called trace formula.

\section*{Acknowledgements}
We thank Prof.~L\'aszl\'o Feh\'er and Prof.~Izumi Tsutsui for stimulating discussions.
This research was supported by the Japanese Ministry of Education, Culture, Sports, Science and Technology under the Grant number 21540402.


\end{document}